\documentclass[letterpaper,12pt]{article}

\usepackage{endnotes}
\usepackage[normalem]{ulem}

\usepackage{amssymb}
\usepackage{amsfonts}
\usepackage{amsmath} 
\usepackage{amsthm}
\usepackage{graphicx,color}

\usepackage{comment}
\usepackage{chicago}
\usepackage{array}

\usepackage{url} 

\usepackage{setspace}
\def\R{\mathbb{R}}
\def\endproof{\hfill\diamondsuit}

\def\sA{{\mathcal A}}

\def\sC{{\mathcal C}}

\newcommand\hx{{\hat x}}
\newcommand\tx{{\tilde x}}

\def\tv{\tilde{v}}
\def\ta{\tilde{a}}

\def\E{\mathbb{E}}
\def\V{\mathbb{V}}

\def\N{\mathbb N}

\newcommand{\hp}{\hat{p}}
\newcommand{\hy}{\hat{y}}

\newcommand{\floor}[1]{\left\lfloor #1 \right\rfloor}

\numberwithin{equation}{section}
\theoremstyle{plain}                
\newtheorem{theorem}{Theorem}[section]
\newtheorem{lemma}[theorem]{Lemma}

\theoremstyle{definition}           
\newtheorem{definition}[theorem]{Definition}

\theoremstyle{remark}               

\usepackage[margin=1in]{geometry}

\usepackage{ulem} 

\allowdisplaybreaks

\begin{document}

\pagenumbering{arabic} \pagestyle{plain}

\begin{center}
{\large \bf 
Existence and convergence of discrete-time Kyle models with multiple insiders}

\ \\ \ \\

{\large \bf Jin Hyuk Choi}\\ 
Ulsan National Institute of Science and Technology (UNIST) 

\ \\

{\large \bf Kasper Larsen}\\
Rutgers University

\end{center}
\begin{center}
\ \\

{\normalsize \today }
\end{center}
\vspace{.5cm}

\begin{verse} {\sc Abstract}:  
Foster and Viswanathan (1996) extend the discrete-time setting of Kyle (1985) to multiple informed traders who have partial information about the stock's terminal dividend. We resolve two long-standing open problems in this literature. First, we prove that an  equilibrium exists in the setting of Foster and Viswanathan (1996). Second, as the number of trading times goes to infinity, we prove that the discrete-time equilibrium converges to the continuous-time equilibrium already proven to exist in Back, Cao, and Willard (2000).

\end{verse}

\vspace{0.25cm}
{\sc Keywords}: Informed trading, Nash equilibrium, Existence, Convergence









\section{Introduction}  Kyle (1985) is a cornerstone model in market microstructure theory. Holden and Subrahmanyam (1992) extend Kyle (1985) to multiple fully informed traders. Foster and  Viswanathan (1996) further extend Kyle (1985) to multiple  informed traders with less than perfect information. While both Kyle (1985) and Holden and Subrahmanyam (1992) give proofs ensuring that a Nash equilibrium exists, no existence proof is available to guarantee that the model in Foster and  Viswanathan (1996) is supported by a Nash equilibrium. In this paper, we give two mathematical contributions. First, we provide an existence proof for the discrete-time  equilibrium model in Foster and  Viswanathan (1996). Second, as the number of trading times goes to infinity, we prove that the discrete-time equilibrium converges to the continuous-time equilibrium proven to exist in Back, Cao, and Willard (2000).

To reduce the model's complexity, Foster and  Viswanathan (1996) place a symmetry assumption on the traders' information structure.\footnote{Foster and  Viswanathan (1994) give an example of a Kyle model with an asymmetric information structure which leads to a two dimensional Kalman filter. Such extensions are not covered by our analysis.}  The symmetry assumption produces a one dimensional second-order recursion. Because the dimension is one, we can use a shooting technique to prove existence of a solution. In turn, we use this difference solution to construct a discrete-time equilibrium in the setting of Foster and  Viswanathan (1996). 

Similar to Kyle (1985), we subsequently consider convergence from discrete to continuous time.  Back, Cao, and Willard (2000) prove  existence of a certain ODE solution, and show this ODE solution produces a continuous-time equilibrium. We prove that our discrete-time recursion converges pointwise to the ODE solution in Back, Cao, and Willard (2000)  as the number of trading times goes to infinity. Consequently, the discrete-time equilibrium model in Foster and  Viswanathan (1996) converges to the continuous-time model in Back, Cao, and Willard (2000). We note that this type of convergence is not generic in Kyle models. For example, the discrete-time random-horizon model in Caldentey and Stacchetti (2010) does  converge to a well-behaved limit, however, the limit does not correspond to the analogous continuous-time Kyle model.

\section{Foster and Viswanathan (1996)} 
This section outlines the model from Foster and Viswanathan (1996). There are three groups of traders trading the stock at trading times $\{\frac1N,\frac2N,...,1\}$ for some $N\in\N$. A bond with an exogenously specified zero interest rate is also available for trading.

\emph{Noise traders:} At trading time $n\in \{1,...,N\}$, the noise traders'  aggregate stock orders are exogenously given as a Brownian motion increment $\Delta w_n \sim \mathcal{N}(0,\sigma^2)$ where $\sigma^2:=\frac{\sigma_w^2}N $ for a constant $\sigma_w>0$.

\emph{Informed traders:} There are $M\in \N$ partially informed traders. We let $(\ta_1,...,\ta_M)$ be jointly normally distributed random variables with zero means, identical variances $\sigma^2_{\ta} := \E[\ta^2_i]$, and identical covariances $\rho:=\text{cov}(\ta_i,\ta_j)$ for $i,j \in \{1,...,M\}$ with $i\neq j$.  We assume $(\ta_1,...,\ta_M)$ are independent of the noise traders' orders $(\Delta w_1,...,\Delta w_N)$. Informed trader $i \in \{1,...,M\}$ observes $\ta_i$ initially and observes past stock prices $(p_1,...,p_{n-1})$ prior to submitting her own orders $\Delta \theta_{i,n}$ at time $n \in \{1,...,N\}$.\footnote{ As in  Kyle (1985), only $(p_1,...,p_{n-1})$ but not $p_n$ is observed prior to submitting informed orders  $\Delta \theta_{i,n}$ at time $n$.  Rochet and Vila (1994) construct a model where also $p_n$ is observed by the informed trader before she submits orders.} As in Back (1992), there will be a one-to-one correspondence between stock prices $(p_1,...,p_n)$ and aggregate holdings $(y_1,...,y_n)$ defined by the increments
\begin{align}\label{dyn}
\begin{split}
\Delta y_n &:= \sum_ {j=1}^M\Delta \theta_{j,n} + \Delta w_n,\quad  y_0:=0.
\end{split}
\end{align}
Therefore, with no loss of generality, we make the requirement $\Delta \theta_{i,n} \in \sigma(\ta_i,y_1,...,y_{n-1})$. 
\begin{definition}\label{def_admissible} We say $(\Delta \theta_{i,1},...,\Delta \theta_{i,N})\in \sA_{i}$ if and only if $\Delta \theta_{i,1} \in \sigma(\ta_i)$, $\Delta \theta_{i,n} \in \sigma(\ta_i,y_1,...,y_{n-1})$ for $n\in \{2,...,N\}$, and $\E[(\Delta \theta_{i,n})^2]<\infty$ for $n\in \{1,...,N\}$.
$\endproof$
\end{definition}
As in Back, Cao, and Willard (2000), we can with no loss of generality assume   that the stock's terminal value (i.e., the terminal dividends) is exogenously given by 
\begin{align}\label{v}
\tv := \sum_ {i=1}^M\ta_i.
\end{align}
Therefore, informed trader $i$ has partial knowledge of $\tv$ through her initial observation of $\ta_i$. As in Kyle (1985), trader $i$ seeks to solve
\begin{align}\label{obji}
\sup_{\sA_i}\E\Big[\sum_{n=1}^N (\tv - p_n)\Delta \theta_{i,n}\Big| \sigma(\ta_i)\Big].
\end{align}

\emph{Market makers:} At each trading time, the market makers observe aggregate orders and set the stock price as 
\begin{align}\label{me00}
p_n =\E[\tv|\sigma(y_1,...,y_n)],\quad n=1,...,N,\quad p_0:=0.
\end{align}
As in Kyle (1985), the martingale property \eqref{me00} is a zero-expected-profits condition and stems from competition among market makers.  

The following definition is  from Foster and Viswanathan (1996).

\begin{definition}\label{def_Nash} Sequences of non-zero constants $(\lambda_n,\zeta_n,\beta_n)_{n=1}^N$ constitute a discrete-time Nash equilibrium if and only if the following two conditions hold:
\begin{itemize}
\item[(i)] For $n=1,...,N$, the processes 
\begin{align}
\begin{split}\label{hatprocesses}
\Delta \hat{\theta}_{i,n} &:= \beta_n(\ta_i -\hat{t}_{n-1}),\quad \hat{\theta}^i_0:=0, \\
\Delta \hat{y}_n &:= \sum_ {i=1}^M\Delta \hat{\theta}_{i,n} + \Delta w_n,\quad  \hat{y}_0:=0, \\
\Delta \hat{p}_n &:=\lambda_n \Delta \hat{y}_n,\quad \hp_0:=0, \\
\Delta \hat{t}_n &:=\zeta_n\Delta  \hat{y}_n,\quad \hat{t}_0:=0, 
\end{split}
\end{align}
satisfy the market efficiency condition
\begin{align}\label{me}
\hat{p}_n =\E[\tv|\sigma(\hy_1,...,\hy_n)].
\end{align}
\item[(ii)] For $n=1,...,N$ and for any informed trader $i\in \{1,...,M\}$, the market makers' response function $\Delta p_n$ is defined as 
\begin{align}\label{unhatprocesses}
\begin{split}
\Delta p_n &:= \lambda_n \Delta y_n,\quad p_0:=0,\\
\Delta t_n &:= \zeta_n \Delta y_n,\quad t_0:=0,
\end{split}
\end{align}
where $\Delta y_n$ is defined in \eqref{dyn} when informed trader $j$'s response is
\begin{align}\label{responsej}
\Delta \theta_{j,n} = \beta_n (\ta_j - t_{n-1}),\quad j\neq i.
\end{align}

Then, there exist constants $(\alpha_n)_{n=1}^N$ such that trader $i$'s optimal trades for \eqref{obji} are
\begin{align}\label{optimali}
\beta_n (\ta_i - t_{n-1}) + \alpha_n(\hat{t}_{n-1}-t_{n-1}).
\end{align}
$\endproof$
\end{itemize}
\end{definition}

%



Holden and Subrahmanyam (1992) ensure existence of a Nash equilibrium when all informed traders have full information (i.e., when $|\rho| = \sigma_{\ta}^2$). The next result ensures existence of a Nash equilibrium for partially symmetrically informed traders. Its proof is our main contribution and is given in the next section. The following  parameter restriction on the covariance coefficient $\rho$ is equivalent to the variance-covariance matrix being positive definite, hence,  is also in Foster and Viswanathan (1996). We will use the following notation 
\begin{align}
y^{\pm}:= \frac{(M+1)y -1 \pm \sqrt{4y + \big((M+1)y -1\big)^2}}{2}, \quad y>0. \label{y+_def}
\end{align}

\begin{theorem}\label{thm_Main}For $M\ge 2$ partially informed traders, $N\in\N$ trading rounds, assume $-\frac{\sigma_{\ta}^2}{M-1}<\rho < \sigma_{\ta}^2$, and  set $c_0 := (M-1)(\rho- \sigma^2_{\ta})<0$.
\begin{enumerate}

\item There exists a constant $\Sigma_{N} > 0$ such the recursion
\begin{align}\label{recursive}
\begin{split}
&\Sigma_{N-1}: = \Sigma_{N}^+,\\
&(\Sigma_n - \Sigma_{n+1})\Big((1+(M+1)\Sigma_{n-1})\Sigma_n -(1+\Sigma_{n-1})\Sigma_{n-1} \Big)^2\\
&=\Sigma_{n-1}\Sigma_{n+1}(\Sigma_{n-1}-\Sigma_n)\Big(\Sigma_n + (M-1)\Sigma_{n+1}\Big)^2,\quad n=N-1,...,1,
\end{split}
\end{align}
with
\begin{align}\label{Bellman2a}
\begin{split}
\Sigma^+_{n}> \Sigma_{n-1}> \Sigma_{n},\quad n=N-1,...,1,
\end{split}
\end{align}
uniquely produces the initial value $\Sigma_{0} :=\frac{M \sigma^2_{\ta}+c_0}{-c_0}>0$.

\item  The following coefficients produce a  discrete-time  Nash equilibrium
\begin{align}\label{S2intermsofS1}
\begin{split}
\Sigma_{2,n}&:=(M-1) \Sigma_n  \left(\sigma_{\ta}^2-\rho \right),\quad n=0,1,...,N,\\
\beta_n &:=\frac{ \sigma\sqrt{\Sigma_{2,n-1} - \Sigma_{2,n}}}{\sqrt{M}\sqrt{\Sigma_{2,n-1}}\sqrt{\Sigma_{2,n}}},\quad
\zeta_n := \frac{\beta_n\Sigma_{2,n}}{\sigma^2},\quad \lambda_n : = M\zeta_n\quad n=1,...,N.
\end{split}
\end{align}
\end{enumerate}
\end{theorem}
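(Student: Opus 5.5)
The plan has two parts: a shooting argument in the single parameter $\Sigma_N>0$ for part 1, and a verification argument (Kalman filtering plus a quadratic value function) for part 2.

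\textbf{Part 1: shooting.} Fix $\Sigma_N=x>0$ and set $\Sigma_{N-1}=x^+$. Since $x^+$ is the positive root of $z^2-((M+1)x-1)z-x=0$, we have $x^+>x$.

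The recursion \eqref{recursive} is treated as an equation for $\Sigma_{n-1}$ given $(\Sigma_n,\Sigma_{n+1})$ with $\Sigma_n>\Sigma_{n+1}$. Define
\begin{align*}
F(z):=(\Sigma_n-\Sigma_{n+1})\big((1+(M+1)z)\Sigma_n-(1+z)z\big)^2-z\,\Sigma_{n+1}(z-\Sigma_n)\big(\Sigma_n+(M-1)\Sigma_{n+1}\big)^2
\end{align*}
on the interval $z\in(\Sigma_n,\Sigma_n^+)$. The points $\Sigma_n^\pm$ are exactly the roots of the quadratic $(1+(M+1)z)\Sigma_n-(1+z)z$ viewed in $z$, so the first squared factor vanishes at $z=\Sigma_n^+$. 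This gives the endpoint signs:
\begin{itemize}
\item at $z=\Sigma_n$: $F>0$;
\item at $z=\Sigma_n^+$: $F<0$.
\end{itemize}
Hence a root exists in $(\Sigma_n,\Sigma_n^+)$. For uniqueness, I would show that $F$ is strictly decreasing there, or that $F$, a quartic, has exactly one root in the interval by a sign-change count.

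This defines maps $x\mapsto\Sigma_n(x)$ for $n=N,\dots,0$ satisfying \eqref{Bellman2a}. By induction and the implicit function theorem, each $\Sigma_n(x)$ is continuous and strictly increasing in $x$. Monotonicity in $\Sigma_n$ and $\Sigma_{n+1}$ must be checked from the signs of the partial derivatives of $F$; this is the delicate part.

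Next come the limits. As $x\downarrow0$, all $\Sigma_n(x)\to0$, because $\Sigma_n^+\to0$ as $\Sigma_n\to0$ (here $y^+\approx\sqrt y$). As $x\to\infty$, $\Sigma_0(x)\ge\Sigma_N=x\to\infty$. By the intermediate value theorem there is an $x$ with $\Sigma_0(x)=\frac{M\sigma_{\ta}^2+c_0}{-c_0}$, and strict monotonicity gives uniqueness. Positivity of this target value follows from $\rho>-\sigma_{\ta}^2/(M-1)$.

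\textbf{Part 2: verification.} First compute the filtering. Under the hat dynamics, write $\Sigma_{2,n}$ for the conditional variance of $\ta_i-\hat t_n$. Here $\hat t_n$ is interpreted as the market makers' estimate of the average signal $\tv/M$, or more precisely the common-component estimate, so that $\hat p_n=M\hat t_n$.

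Gaussian updating with order flow $\Delta\hy_n=\beta_n\sum_i(\ta_i-\hat t_{n-1})+\Delta w_n$ gives:
\begin{itemize}
\item the gain $\zeta_n=\beta_n\Sigma_{2,n}/\sigma^2$, after the usual simplification;
\item $\lambda_n=M\zeta_n$;
\item the variance recursion $\Sigma_{2,n}^{-1}=\Sigma_{2,n-1}^{-1}+M\beta_n^2/\sigma^2$, which is exactly the formula for $\beta_n$.
\end{itemize}
The initial value $\Sigma_{2,0}=(M-1)(\sigma_{\ta}^2-\rho)\Sigma_0$ must match the prior. This uses the decomposition of the $\ta_i$ into a common part plus idiosyncratic parts, and is where the specific value of $\Sigma_0$ enters.

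Second, the optimality of \eqref{optimali}. Fix trader $i$, with the others following \eqref{responsej}. Her state is $(\ta_i-t_{n-1},\hat t_{n-1}-t_{n-1})$ together with her own conditional estimates. I would guess a quadratic value function
\begin{align*}
V_n=A_n(\ta_i-t_n)^2+B_n(\ta_i-t_n)(\hat t_n-t_n)+C_n(\hat t_n-t_n)^2+D_n
\end{align*}
and run a backward induction. Each period's problem is a concave quadratic maximization in $\Delta\theta_{i,n}$. The first-order condition must reproduce $\beta_n$ and some $\alpha_n$, and the second-order condition must hold.

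Matching coefficients yields algebraic relations between $(\beta_n,\lambda_n,\zeta_n)$ and $(A_n,B_n,C_n)$. Eliminating these relations should reduce precisely to \eqref{recursive}, with the terminal condition $\Sigma_{N-1}=\Sigma_N^+$. Concavity then requires the inequalities \eqref{Bellman2a}.

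\textbf{Main obstacle.} I expect the hardest step to be in Part 1: establishing uniqueness of the root and the monotone dependence of $\Sigma_{n-1}$ on $(\Sigma_n,\Sigma_{n+1})$, both needed for the shooting argument. I would first try a substitution $\Sigma_{n-1}=\Sigma_n+s(\Sigma_n^+-\Sigma_n)$ with $s\in(0,1)$ and show that $F$ is monotone in $s$.
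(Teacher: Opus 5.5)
Your architecture is the paper's: shooting in $\Sigma_N$ with continuity and the intermediate value theorem, Gaussian filtering, and a quadratic value function verified by backward induction. Part 1 is essentially sound, but you have put the difficulty in the wrong place. The theorem only asserts that some $\Sigma_N$ works, so monotonicity of $x\mapsto\Sigma_n(x)$ is never needed; continuity suffices. Uniqueness and simplicity of the root in $(\Sigma_n,\Sigma_n^+)$ follow from a sign count. Your $F$ is a quartic in $z$ with positive leading coefficient, and since $\Sigma_n^-<0$ you also have $F(\Sigma_n^-)<0$. Hence $F$ has exactly one simple root in each of $(-\infty,\Sigma_n^-)$, $(\Sigma_n^-,\Sigma_n)$, $(\Sigma_n,\Sigma_n^+)$ and $(\Sigma_n^+,\infty)$. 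Simplicity then gives continuity in $(\Sigma_n,\Sigma_{n+1})$.

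\textbf{The gap: the second-order condition.} In Part 2 you assert this condition rather than prove it. In the one-period problem, $\Delta\theta_{i,n}$ moves the next state only through $Y_{2,n}=\hat t_n-t_n$, with coefficient $-\zeta_n$. Its quadratic coefficient is therefore $-\lambda_n+\zeta_n^2I_{22,n}=\zeta_n(\zeta_nI_{22,n}-M)$, where $I_{22,n}\ge0$ is the coefficient of $Y_{2,n}^2$ in the value function (in your parametrization $I_{22,n}=A_n+B_n+C_n$).

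Distorting the public estimate has positive continuation value, so concavity is not automatic. It is also not a consequence of \eqref{Bellman2a} by itself. The paper handles this in parts 3 and 4 of Lemma~\ref{lem_difference}:
\begin{itemize}
\item $I_{22,n}$ satisfies its own backward recursion.
\item The bound \eqref{SOC1}, which gives $\zeta_nI_{22,n}\le M/2$, is propagated backward.
\item The propagation relies on the inequality \eqref{D_ineq}. Its proof is the hardest algebra in the existence argument: it compares the root $\tx(y,z)$ of $h$ with the root of an auxiliary quadratic on the region $0<z<y\le z^+$.
\end{itemize}
Without such an estimate your verification does not close: the first-order-condition candidate could be a minimizer, or the supremum could be infinite.

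\textbf{A separate error in the filtering step.} You misidentify $\Sigma_{2,n}$. It is not $\V[\ta_i-\hat t_n]$, which equals $\Sigma_{1,n}=(\Sigma_{2,n}-c_0)/M$. Rather, $\Sigma_{2,n}=\E[(\ta_i-\hat t_n)\sum_j(\ta_j-\hat t_n)]=\frac1M\V[\tv-\hp_n]$. With your reading the initial check fails (unless $\rho=0$): $\V[\ta_i]=\sigma_{\ta}^2$, while $-c_0\Sigma_0=\sigma_{\ta}^2+(M-1)\rho$. Your recursion $\Sigma_{2,n}^{-1}=\Sigma_{2,n-1}^{-1}+M\beta_n^2/\sigma^2$ is correct for the right object. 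Trader $i$'s own filter needs both quantities, through $\E[\sum_j(\ta_j-\hat t_{n-1})\mid\hat{\mathcal F}^i_n]=\frac{\Sigma_{2,n-1}}{\Sigma_{1,n-1}}(\ta_i-\hat t_{n-1})$.
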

The proof is in Section \ref{sec:Existence}.

\section{Back, Cao, and Willard (2000)} Before stating our convergence result, this section briefly outlines the continuous-time model as it is presented in Back, Cao, and Willard (2000).

\emph{Noise traders:} The noise traders'  aggregate stock holdings are exogenously given as a Brownian motion $w = (w_t)_{t\in[0,1]}$, which we for simplicity of exposition take to be a standard Brownian motion with zero drift and variance $t$. 

\emph{Informed traders:}  The private random variables $(\ta_1,...,\ta_M)$ are as in Foster and Viswanathan (1996) and are assumed to be independent of $w$. Back, Cao, and Willard (2000) restrict insider $i$ to only be allowed to use absolutely continuous holding processes $\theta_i = (\theta_{i,s})_{s\in[0,1]}$. We denote the time derivative of $\theta_{i,s}$ by $\theta'_{i,s}$, which is required to be adapted to $\sigma(\ta_i, (y_u)_{u\in[0,s]})$, where
the aggregate holding process is defined by the dynamics
\begin{align}\label{dyn_BCW1}
\begin{split}
d y_s&:= \sum_ {j=1}^M \theta'_{j,s}ds + d w_s,\quad  y_0:=0.
\end{split}
\end{align}
 
Trader $i$'s optimization problem \eqref{obji} is adjusted to\footnote{To have well-defined optimization problems, additional integrability conditions are imposed in Back, Cao, and Willard (2000) to specify insider $i$'s admissible set $\sA_i$. Because we do not need these conditions in what follows, we omit them.}
\begin{align}\label{obji_BCW1}
\sup_{\theta_i'\in\sA_i}\E\Big[\int_0^1(\tv - p_s) \theta'_{i,s}ds\Big| \sigma(\ta_i)\Big],
\end{align}
where the linear pricing rule in \eqref{unhatprocesses} is adjusted to
\begin{align}
\begin{split}
dp_s &:= \lambda(s)dy_s,\quad p_0:=0,\\
\end{split}
\end{align}
for a deterministic function of time $\lambda = \lambda(s)$.  To solve insider $i$'s optimization problem \eqref{obji_BCW1}, Back, Cao, and Willard (2000) consider $j$'s response to  $i$'s arbitrary order-rate process $\theta'_i$ given by
\begin{align}\label{BCWa1}
\begin{split}
\theta_{j,s}' &= \alpha(s)p_s + \beta(s)\tilde a_j,\\
dp_s &= \lambda(s)\Big(\theta'_{i,s}+  \sum_{j\neq i} \big( \alpha(s)p_s + \beta(s)\tilde a_j\big) + dw_s\Big),\quad p_0:=0.
\end{split}
\end{align}
In \eqref{BCWa1}, the coefficients $\alpha = \alpha(s)$ and $\beta = \beta(s)$ are deterministic functions of time. 

\emph{Market makers:} In equilibrium, the market makers zero-expected profit condition is the property
\begin{align}\label{me0}
\hp_s =\E[\tv|\sigma(\hy_u)_{u\in[0,s]}],\quad s\in[0,1],
\end{align}
where $\hy$ denotes the aggregate-order flow in \eqref{dyn_BCW1} when all insiders use their optimizers, i.e.,
\begin{align}\label{BCWa2}
\begin{split}
d \hy_s&:= \sum_ {j=1}^M \big(\alpha(s)\hp_s + \beta(s)\tilde a_j\big)ds + d w_s,\quad  \hy_0:=0,\\
d\hp_s &:= \lambda(s)\Big(  \sum_{j=1}^M \big( \alpha(s)\hp_s + \beta(s)\tilde a_j\big) + dw_s\Big),\quad \hp_0:=0.
\end{split}
\end{align}

The following result is from Back, Cao, and Willard (2000). 
\begin{theorem}[Back, Cao, and Willard] For $M\ge 2$ partially informed traders and $-\frac{\sigma_{\ta}^2}{M-1}<\rho < \sigma_{\ta}^2$, we have:
\begin{enumerate}
\item The second-order ODE 
\begin{align}\label{BCW1}
\begin{cases}
S''(t) = \frac{4(M - 1)S(t) - 2}{MS(t)^2}S'(t)^2,\quad t\in(0,1), \\
S(0) = \frac{\V[\tv]}{M(M-1)(\sigma^2_{\ta}-\rho)},\quad S(1) = 0,
\end{cases}
\end{align}
has a strictly decreasing solution $S\in \sC([0,1]) \cap \sC^2([0,1))$. 
\item A continuous-time Nash equilibrium is given  by the coefficient functions
\begin{align}\label{BCWS}
 \beta(t) := \sqrt{-\frac{G'(t)}{G(t)^2
}},\quad \lambda(t) :=\beta(t)G(t),\quad \alpha(t) := -\frac1M \beta(t),
\end{align}
where the remaining variance of $\tv$ is $G(t):= M(M-1)( \sigma^2_{\ta}-\rho)S(t)$.
\end{enumerate}
\end{theorem}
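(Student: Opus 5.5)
Part 1 rests on the fact that the ODE \eqref{BCW1} is autonomous, so I would reduce it to a first-order equation in the state variable. On any interval where $S$ is strictly decreasing I write $S'(t)=h(S(t))$; then $S''=h(S)h'(S)S'/S'$, and \eqref{BCW1} becomes $h'(s)=f(s)h(s)$ with
\begin{align*}
f(s):=\frac{4(M-1)}{Ms}-\frac{2}{Ms^2}.
\end{align*}
Integrating gives the one-parameter family
\begin{align*}
h(s)=-C\,s^{4(M-1)/M}e^{2/(Ms)},\qquad C>0.
\end{align*}
Write $a:=4(M-1)/M$ and $S_0:=S(0)$. The solution is then defined implicitly by
\begin{align*}
t=\frac1C\int_{S(t)}^{S_0}s^{-a}e^{-2/(Ms)}\,ds.
\end{align*}
The integrand is bounded near $s=0$, because $e^{-2/(Ms)}$ kills every negative power of $s$. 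Hence the time needed to reach $S=0$ is
\begin{align*}
T(C)=\frac1C\int_0^{S_0}s^{-a}e^{-2/(Ms)}\,ds<\infty.
\end{align*}
I choose the unique $C>0$ with $T(C)=1$. Inverting the strictly increasing map $S\mapsto t$ then gives a strictly decreasing $S$ with $S(0)=S_0$ and $S(1)=0$. It lies in $\sC^2([0,1))$ because $h$ is smooth on $(0,S_0]$, and it is continuous at $t=1$. Since $S(0)=\V[\tv]/(M(M-1)(\sigma^2_{\ta}-\rho))>0$ under the covariance restriction, everything is well defined. The only point needing care is that $S'\to 0$ extremely fast as $t\uparrow1$, so $S''$ need not extend to $t=1$; this is why the statement only claims $\sC^2([0,1))$.

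For Part 2, I would first handle market efficiency. With the linear strategies \eqref{BCWa2}, the pair $(\hp,\hy)$ together with $\tv$ forms a linear Gaussian system, so the Kalman--Bucy filter gives the conditional variance $G(t)=\V[\tv\,|\,\sigma(\hy_u)_{u\le t}]$ through a Riccati equation $G'=-(M\beta)^2G^2/M$, up to the exact constant I would verify. Part of the same computation is that the filter gain must equal $\lambda$, which yields $\lambda=\beta G$. The choice $\beta=\sqrt{-G'/G^2}$ is exactly what makes the Riccati equation hold. The choice $\alpha=-\beta/M$ makes $\sum_j(\alpha\hp+\beta\ta_j)=\beta(\tv-\hp)$, so the filter's innovation is driven by $\tv-\hp$ as required and \eqref{me0} follows.

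Next comes the insider's problem. Fix $i$ and let the others play \eqref{BCWa1}. Insider $i$'s relevant state is $(\ta_i,p_s)$ together with her filtered estimate of $\sum_{j\ne i}\ta_j$ given $(\ta_i,p)$. By the exchangeable covariance structure and Gaussianity, this estimate is an affine function of $\ta_i$ and a single scalar statistic built from $p$. This collapses the problem to a low-dimensional linear-quadratic control problem. I would posit a quadratic value function $V(s,p,\hat t)$ with deterministic coefficients. Because the controlled drift enters linearly, the HJB equation forces the coefficient of $\theta'_{i,s}$ to vanish identically, so the insider is indifferent among rates. Matching the remaining coefficients gives ODEs for the value-function coefficients. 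Eliminating them in terms of $G$ (equivalently $S$) should reproduce exactly the second-order equation \eqref{BCW1}; this is where $4(M-1)S-2$ appears.

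Finally, I need a verification argument. Applying It\^o's formula to $V$ along an arbitrary admissible $\theta_i$ gives expected profit $\le V(0,\cdot)$, with equality for the candidate strategy. For this I need that the terminal term vanishes, i.e.\ $p_1=\tv$ (equivalently $G(1)=0$) and $V(1,\cdot)=0$, and that the local martingales are true martingales. I expect the main obstacle to be this last step together with the dimension-reduced filtering for insider $i$. Near $t=1$, $\beta(t)\to\infty$ because $G'/G^2$ blows up, so I would prove integrability and transversality by working on $[0,1-\epsilon]$, then letting $\epsilon\downarrow0$ using the explicit decay of $S$ near $1$ from Part 1 and the admissibility conditions.
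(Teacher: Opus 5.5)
Your Part 1 is correct. It is essentially the quadrature the paper uses in Lemma~\ref{lemma:BCWplus}, where $t\mapsto H(S(t))$ is shown to be affine. The paper's own proof of this theorem only cites Back, Cao, and Willard and checks that their ODE \eqref{BCWkeyode} becomes \eqref{BCW1} after inserting \eqref{BCWS} and \eqref{BCW001}.

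One factual correction. Your $h(s)=-Cs^{a}e^{2/(Ms)}$ blows up as $s\downarrow 0$, so $S'(t)\to-\infty$ as $t\uparrow 1$, not $S'\to 0$; the paper records this. In fact $1-t\approx \frac{M}{2C}S^{2-a}e^{-2/(MS)}$, so $S(t)\sim \frac{2}{M|\ln(1-t)|}$ decays only logarithmically. This is the behaviour your $\epsilon\downarrow 0$ argument must handle.

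Part 2 takes a different route from the paper: a self-contained rederivation instead of a citation. It has a genuine gap at the step ``matching the remaining coefficients \dots\ should reproduce \eqref{BCW1}''.

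Set up the state as follows. Let $Y_1:=\ta_i-\hp/M$ and $Y_2:=(\hp-p)/M$, where $\hp$ is the price that would prevail had $i$ not deviated. Let $r:=\Sigma_2/\Sigma_1$ with $\Sigma_1,\Sigma_2$ from \eqref{BCW001}. Then $\E[\tv-p_s\,|\,\ta_i,(y_u)_{u\le s}]=rY_1+MY_2$, $dY_2=\frac{\lambda}{M}\big(\beta Y_1-(M-1)\beta Y_2-\theta'_i\big)ds$, and $dY_1=-\frac{\lambda}{M}(dw_i+\beta rY_1\,ds)$, with $w_i$ a Brownian motion in $i$'s filtration.

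Killing the $\theta'_i$-coefficient forces $V_{Y_2}=\frac{M}{\lambda}(rY_1+MY_2)$ for every smooth $V$, not just quadratic ones. The rest of the HJB is then $A(t,Y_1)+c_{12}(t)Y_1Y_2+c_{22}(t)Y_2^2$. Using $\lambda=\beta G$ and $G'=-\beta^2G^2$:
\begin{itemize}
\item $c_{12}=0$ gives $(1/\lambda)'=\beta\big(2-\frac1M-\frac{\Sigma_1}{\Sigma_2}\big)$. This is exactly \eqref{BCWkeyode}, since $\frac{\Sigma_2}{\Omega-M\Sigma_2}=-\frac{\Sigma_1}{\Sigma_2}$, and hence \eqref{BCW1}.
\item However, $c_{22}=\frac{M^2}{2}\big((1/\lambda)'-\frac{2(M-1)}{M}\beta\big)$, and $c_{22}=0$ amounts to $S''=\frac{4(M-1)}{M}\frac{(S')^2}{S}$. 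The ``$-2$'' is missing.
\end{itemize}
The two conditions are incompatible, and the second has no solution with $S(1)=0$, since $S^{1-4(M-1)/M}$ would be affine. So no smooth $V$ solves the HJB with equality, and your coefficient matching is overdetermined.

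The missing idea is to impose only $A\equiv 0$ and $c_{12}=0$; the cross term is where $4(M-1)S-2$ comes from. You then accept a residual: under \eqref{BCW1}, $c_{22}Y_2^2=-\frac{\beta}{2MS}(\hp-p)^2\le 0$. Thus $V$ is only a supersolution. The verification gives expected profit equal to $V$ at time $0$ minus $\E\int_0^1\frac{\beta}{2MS}(\hp_s-p_s)^2\,ds$, once the terminal term is disposed of. Here the $Y_1Y_2$- and $Y_2^2$-coefficients of $V$, namely $\frac{Mr}{\lambda}$ and $\frac{M^2}{2\lambda}$, vanish as $t\uparrow 1$ because $\lambda\to\infty$.

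Consequently, unlike single-insider Kyle, the insider is \emph{not} indifferent among rates. A strategy is optimal iff it keeps $p\equiv\hp$, i.e.\ iff $\theta'_i=\beta(\ta_i-p/M)$; this is what pins down the equilibrium and is what your verification must show.

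A minor point: the Kalman--Bucy equation is $G'=-\beta^2G^2$ with gain $\lambda=\beta G$, not $-M\beta^2G^2$. That is precisely why $\beta=\sqrt{-G'/G^2}$.
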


\proof The equilibrium existence proof in Back, Cao, and Willard (2000) is based on the ODE in their Eq. (31), which  reads
\begin{align}\label{BCWkeyode}
\frac{d}{dt}\frac1{\lambda(t)}= \beta(t)\Big(2-\frac1M+\frac{\Sigma_2(t)}{\Omega(t)-M\Sigma_2(t)}\Big),
\end{align}
where the coefficients are in \eqref{BCWS} and 
\begin{align}\label{BCW001}
\begin{split}
\Sigma_2:=\frac{G}M,\quad  \Omega :=\Sigma_2\Big( M -\frac{\Sigma_2}{\Sigma_1}\Big), \quad \Sigma_1 :=\frac{\Sigma_2 - (M-1)(\rho -\sigma_{\ta}^2)}M.
\end{split}
\end{align}
Inserting the coefficients \eqref{BCWS} and \eqref{BCW001} into \eqref{BCWkeyode} produces the ODE in \eqref{BCW1}.

$\endproof$

Because all discrete-time equilibrium coefficients  $(\lambda_n,\zeta_n,\beta_n)_{n=1}^N$ in \eqref{S2intermsofS1} are given as continuous functions of the discrete-time recursion \eqref{recursive}, and all continuous-time equilibrium coefficients \eqref{BCWS} and \eqref{BCW001} are given as continuous functions of the ODE solution $S$ of \eqref{BCW1} and its derivative $S'(t)$, all coefficients converge as soon as we show that the recursion converges pointwise to ODE solution. Consequently, the next result shows that the discrete-time equilibrium model Foster and Viswanathan (1996) converges to the continuous-time equilibrium model in Back, Cao, and Willard (2000). 

\begin{theorem}\label{thm_converge} Let $M\ge 2$ and $-\frac{\sigma_{\ta}^2}{M-1}<\rho < \sigma_{\ta}^2$. For $N\in\N$, let $\Sigma_n$, $n \in \{0,1,...,N\}$, be the unique solution of \eqref{recursive}-\eqref{Bellman2a} and let $S(t)$, $t\in[0,1]$, be the unique solution of \eqref{BCW1}. Then, 
for $t\in [0,1)$, we have 
\begin{align}
\lim_{N\to \infty}\Sigma_{\floor{tN}} = S(t) \quad \textrm{and}\quad 
\lim_{N\to \infty}\frac{\Sigma_{\floor{tN}+1}-\Sigma_{\floor{tN}}}{\frac1N} = S'(t). \label{convergence_S_Sprime}
\end{align}
\end{theorem}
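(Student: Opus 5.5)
The plan is a compactness argument, driven by a discrete analogue of the first integral of \eqref{BCW1}. Since $S''/(S')^2=f(S)$ with $f(s):=\frac{4(M-1)s-2}{Ms^2}$, set
\[
F(s):=\frac{4(M-1)}{M}\ln s+\frac{2}{Ms}, \qquad F'=f .
\]
Then every strictly decreasing solution of \eqref{BCW1} satisfies $S'(t)=-K e^{F(S(t))}$ for some constant $K>0$. Integrating, $\int_{S(t)}^{S(0)}e^{-F(s)}\,ds=Kt$. The integrand $e^{-F(s)}=s^{-4(M-1)/M}e^{-2/(Ms)}$ is integrable at $s=0$, so the condition $S(1)=0$ forces $K=\int_0^{S(0)}e^{-F(s)}ds$. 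In particular, the solution of \eqref{BCW1} is unique and is characterized by these two relations. I will show that the discrete solutions satisfy approximate versions of both relations.

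\emph{Step 1 (rewriting the recursion).} Put $\Delta_n:=\Sigma_{n-1}-\Sigma_n>0$, which is positive by \eqref{Bellman2a}. A direct expansion gives
\[
(1+(M+1)\Sigma_{n-1})\Sigma_n-(1+\Sigma_{n-1})\Sigma_{n-1}=M\Sigma_{n-1}\Sigma_n-(1+\Sigma_{n-1})\Delta_n .
\]
Hence \eqref{recursive} becomes
\[
\frac{\Delta_{n+1}}{\Delta_n}=\frac{\Sigma_{n-1}\Sigma_{n+1}\big(\Sigma_n+(M-1)\Sigma_{n+1}\big)^2}{\big(M\Sigma_{n-1}\Sigma_n-(1+\Sigma_{n-1})\Delta_n\big)^2}.
\]
Writing $\Sigma_{n\pm1}=\Sigma_n\mp\Delta_{n}$ or $\Sigma_n-\Delta_{n+1}$ and Taylor expanding, I expect
\[
\ln\frac{\Delta_{n+1}}{\Delta_n}=f(\Sigma_n)\,\Delta_n+R_n,\qquad |R_n|\le C(\delta)\,\Delta_n^2 \quad\text{whenever } \Sigma_n\ge\delta .
\]
This is the discrete form of $(\ln|S'|)'=f(S)S'$; the sign in the $f(\Sigma_n)\Delta_n$ term comes from $\Delta>0$ while $S'<0$. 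The main bookkeeping here is checking that the coefficients reproduce exactly $f$.

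\emph{Step 2 (approximate first integral and a priori bounds).} Define $K_n:=N\Delta_n e^{-F(\Sigma_n)}$. Then
\[
\ln K_{n+1}-\ln K_n=f(\Sigma_n)\Delta_n-\big(F(\Sigma_n)-F(\Sigma_{n+1})\big)+R_n=O(\Delta_n^2) .
\]
As long as $\Sigma_n\ge\delta$, the sum of $\Delta_n$ is at most $\Sigma_0$. A discrete Gronwall argument then gives $\Delta_n\le C/N$, so that the accumulated drift of $\ln K_n$ is $O(1/N)$. Consequently $K_n=K^{(N)}_0\,(1+O(1/N))$ up to the first index where $\Sigma_n<\delta$. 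Next, summing $\Delta_n=N^{-1}K_ne^{F(\Sigma_n)}$ is a Riemann sum for $\int e^{-F}$ in the $\Sigma$ variable. This gives
\[
\int_{\Sigma_n}^{\Sigma_0}e^{-F(s)}ds\approx K^{(N)}_0\,\frac nN .
\]

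\emph{Step 3 (identifying the constant; the main obstacle).} It remains to show $K^{(N)}_0\to K=\int_0^{\Sigma_0}e^{-F}ds$. This amounts to showing that the discrete path reaches (essentially) zero exactly at $n=N$: $\Sigma_{\floor{tN}}$ must not stay bounded away from $0$ up to $t=1$, and must not get near $0$ at some $t<1$. This is the delicate step, because the error bound $C(\delta)$ blows up as $\Sigma\to0$.
\begin{itemize}
\item Upper side: use $\Sigma_N>0$ and $\Sigma_{N-1}=\Sigma_N^+$. From \eqref{y+_def}, $\Sigma_N^+\approx\Sigma_N/(1-(M+1)\Sigma_N)$ for small $\Sigma_N$, so $\Sigma_{N-1},\Sigma_N$ are both tiny or both not. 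I would first show $\Sigma_N\to0$. One route: if $\Sigma_N\ge\delta$ along a subsequence, Step 2 applies on all of $\{0,\dots,N\}$, and the terminal relation fixes $\Delta_N$ of order one, contradicting $\Delta_n=O(1/N)$.
\item Lower side: exploit a one-sided sign of $R_n$ for small $\Sigma_n$, where the $e^{2/(MS)}$ factor makes $\Delta$ decay super-fast. This should show that once $\Sigma$ falls below $\delta$, the remaining steps contribute $o(1)$ in time-integral terms, uniformly as $\delta\to0$.
\end{itemize}
Letting $N\to\infty$ and then $\delta\to0$ in the Riemann-sum identity then forces $K^{(N)}_0\to K$.

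\emph{Step 4 (conclusion).} With $K^{(N)}_0\to K$, fix $t<1$. Since $S(t)>0$, we may choose $\delta<S(t)$. Steps 2--3 then give $\int_{\Sigma_{\floor{tN}}}^{\Sigma_0}e^{-F}\to Kt$, hence $\Sigma_{\floor{tN}}\to S(t)$. Moreover,
\[
N\Delta_{\floor{tN}+1}=K_{\floor{tN}+1}\,e^{F(\Sigma_{\floor{tN}+1})}\to Ke^{F(S(t))}=-S'(t),
\]
which is the second limit in \eqref{convergence_S_Sprime}. Alternatively, Steps 1--2 give equicontinuity of the piecewise-linear interpolants on $[0,T]$. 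Arzelà--Ascoli together with the uniqueness of the solution of \eqref{BCW1} noted above then upgrades subsequential convergence to convergence of the full sequence.
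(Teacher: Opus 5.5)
\paragraph{Gap 1: no a priori bound to start Step~2.} Your interior argument is a legitimate alternative to the paper's local-consistency comparison with the forward family $S_b$ in Lemma~\ref{difference_to_ODE}. It uses the approximately conserved quantity $K_n=N\Delta_n e^{-F(\Sigma_n)}$ plus a Riemann sum. There are two sign slips, but they cancel: in fact $\ln(\Delta_{n+1}/\Delta_n)=-f(\Sigma_n)\Delta_n+O(\Delta_n^2)$ and $\ln K_{n+1}-\ln K_n=\ln(\Delta_{n+1}/\Delta_n)+F(\Sigma_n)-F(\Sigma_{n+1})$.

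The problem is that Step~2 has no starting point. A Gronwall or bootstrap argument on $\ln K_n$ only propagates a bound on $K_0^{(N)}$. The estimate $\sum_n\Delta_n\le\Sigma_0$ gives $\Delta_n\le\Sigma_0$, not $\Delta_n\le C/N$. Because $\Sigma_N$ comes from a shooting argument, nothing you have controls $\Sigma_0-\Sigma_1$. The constraint \eqref{Bellman2a} allows it to be of order one, and then the Taylor expansion of Step~1 is not even available.

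You need a non-perturbative a priori bound such as $\sup_N N(\Sigma_0-\Sigma_1)<\infty$. This is Lemma~\ref{bN_bound}.2 of the paper. With $D_n:=\Sigma_n-\Sigma_{n+1}$, it telescopes the ratio inequality \eqref{D_ineq} of Lemma~\ref{lem_difference}.3 with AM--GM to get $D_0\le D_n(\Sigma_0/\Sigma_{n+1})^{4(M-1)/M}$. It then takes $n$ at the smallest increment and bounds $\Sigma_{n+1}$ below using concavity. Without such a bound you cannot extract a subsequence with $K_0^{(N)}\to K^*<\infty$, so Step~3 cannot start.

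\paragraph{Gap 2: the lower side of Step~3.} Your upper side matches the paper's Step~1 of Lemma~\ref{bN_limit} and is fine once Step~2 is secured. The lower side is the decisive step, and it is missing rather than sketched. You need to show that a path already below $2\delta$ at a time $<1-2\delta$, with discrete slope steeper than $-1$, cannot stay positive up to step $N$.

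The paper gets this from Lemma~\ref{bN_bound}.1: $\Sigma_{n-1}-\Sigma_n<\Sigma_n-\Sigma_{n+1}$ whenever $\Sigma_n<\frac{1}{2M^2}$. Then all later increments exceed $\frac1N$ and sum to more than $2\delta$, a contradiction.

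Your proposed mechanism cannot produce this statement. The constant $C(\delta)$ blows up as $\delta\to0$. Near the end, $\Delta_n/\Sigma_n$ is not small: the final increment is $\Sigma_N-\Sigma_{N+1}=\Sigma_N$ with $\Sigma_{N+1}:=0$. So a sign for $R_n$ cannot come from the expansion; it must come from an exact inequality for $h$. In the paper this is the factorization $h(y+a,y,y-a)=a^2\tilde h(y,a)$. For $y<\frac1{2M^2}$ one has $\partial_a\tilde h>0$ and $\tilde h<0$ at $a=\frac{My^2}{(M+1)y+1}$, and this is combined with \eqref{h_negative}.

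Your heuristic also has the direction reversed. Since $|S'|=Ke^{F(S)}\to\infty$ as $S\downarrow0$, the increments grow near $0$ rather than decay. It is exactly this monotone growth that keeps the path from lingering near zero. With these two lemmas supplied, your Step~3 closes essentially as in the paper.
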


The proof is in Section \ref{sec:Convgence}. Theorem \ref{thm_converge}  cannot be applied for $t=1$. The complication is the singularity of the right-hand-side of \eqref{BCW1} at $t=1$ where $S(1) = 0$.

\section{Existence proofs}\label{sec:Existence}
We start by proving that an auxiliary difference equation has a solution, which we then subsequently use to produce our existence proof.

\subsection{An exogenous difference equation} 


\begin{lemma}\label{lem_difference}  In the setting of Theorem \ref{thm_Main}, we let $c_0 := (M-1)(\rho- \sigma^2_{\ta})<0$.
\begin{enumerate}
\item

For a given constant $\Sigma_{N}>0$, there is a unique solution $\Sigma_0, \Sigma_1,...,\Sigma_{N-1} $ of  
\eqref{recursive} satisfying \eqref{Bellman2a}. Furthermore, the solution $\Sigma_0, \Sigma_1,...,\Sigma_{N-1}$ depends continuously on $\Sigma_N$.

\item There exists a constant $\Sigma_{N} > 0$ such that $\Sigma_{0} =\frac{M \sigma^2_{\ta}+c_0}{-c_0}>0$.
\item The following inequality holds for $n=1,2,\dots,N-1$:
\begin{align}
\frac{\Sigma_{n-1}-\Sigma_n}{\Sigma_n-\Sigma_{n+1}} \le \frac{M^4 \Sigma_n^2 \Sigma_{n-1} \Sigma_{n+1}}{(\Sigma_{n}+(M-1)\Sigma_{n+1})^4}. \label{D_ineq}
\end{align}

\item The recursive equation $I_{22,N}=0$ and
$$
I_{22,n-1} =\frac{M \sigma ^2 \big((M-1) \Sigma_{n}+\Sigma_{n-1}\big)^2}{4 c_0 I_{22,n} \Sigma_{n-1} \Sigma_{n} (\Sigma_{n-1}-\Sigma_{n})+4 M^{3/2} \sigma  \Sigma_{n-1}^{3/2} \sqrt{\Sigma_{n}} \sqrt{-c_0 (\Sigma_{n-1}-\Sigma_{n})}}, 
$$
for  $n=1,...,N$, has a unique positive solution that satisfies
\begin{align}\label{SOC1}
I_{22,n} \le \frac{M^{3/2}\sigma\sqrt{\Sigma_{n-1}}}{2\sqrt{-c_0\Sigma_n(\Sigma_{n-1}-\Sigma_n})}, \quad n=1,...,N.
\end{align}

\end{enumerate}
\end{lemma}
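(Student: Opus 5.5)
We treat the four parts in order. Part 1 uses backward shooting, part 2 follows by continuity and the intermediate value theorem, and parts 3 and 4 are algebraic consequences of the recursion.

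\emph{Part 1.} Fix $\Sigma_N>0$ and set $\Sigma_{N-1}:=\Sigma_N^+$. Note that $y^+$ is the positive root of $z^2-((M+1)y-1)z-y=0$, so $y^+>0$. One checks $y^+>y$, since the quadratic is negative at $z=y$: its value there is $-My^2<0$. We then argue by induction downward in $n$. Suppose $\Sigma_{n}>\Sigma_{n+1}>0$ are given, with $\Sigma_n<\Sigma_{n+1}^+$ when $n\le N-2$. We view \eqref{recursive} as an equation $F(x)=0$ in the unknown $x=\Sigma_{n-1}$ on the interval $(\Sigma_n,\Sigma_n^+)$, where
\[
F(x):=(\Sigma_n-\Sigma_{n+1})\big((1+(M+1)x)\Sigma_n-(1+x)x\big)^2-x\Sigma_{n+1}(x-\Sigma_n)\big(\Sigma_n+(M-1)\Sigma_{n+1}\big)^2 .
\]
The inner factor $(1+(M+1)x)\Sigma_n-(1+x)x$ vanishes exactly when $\Sigma_n=x^+$ fails to hold in the reverse sense. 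Precisely, $x\mapsto x^+$ is increasing, and the constraint $x<\Sigma_n^+$ is equivalent to the inner factor being positive. At $x=\Sigma_n$ we get $F>0$, because the inner factor is then positive and the second term vanishes. Near the upper end, the inner factor tends to $0$, so $F<0$. Hence a root exists. For uniqueness, I would show that on this interval the first term is strictly decreasing in $x$ while the second term is strictly increasing, so $F$ is strictly monotone. If direct monotonicity of the first term fails, I would instead rewrite the equation as $(\Sigma_n-\Sigma_{n+1})/(\Sigma_{n+1}(\Sigma_n+(M-1)\Sigma_{n+1})^2)=x(x-\Sigma_n)/\text{(inner)}^2$ and show the right side is increasing from $0$ to $\infty$. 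The root then satisfies \eqref{Bellman2a}. Continuity in $\Sigma_N$ follows from the implicit function theorem, since the root is simple by strict monotonicity, and composition over finitely many steps preserves it.

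\emph{Part 2.} Let $\Sigma_0(\Sigma_N)$ denote the continuous map from part 1. As $\Sigma_N\downarrow0$, we have $\Sigma_N^+\to 0$. Inductively, each interval $(\Sigma_n,\Sigma_n^+)$ shrinks to $0$, so $\Sigma_0\to0$. As $\Sigma_N\to\infty$, we have $\Sigma_0\ge\Sigma_N\to\infty$ by monotonicity. The intermediate value theorem then gives a $\Sigma_N$ with $\Sigma_0=(M\sigma_{\ta}^2+c_0)/(-c_0)$. This target is positive because $M\sigma_{\ta}^2+c_0=\sigma_{\ta}^2+(M-1)\rho>0$ under the hypothesis on $\rho$.

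\emph{Part 3.} Write $D:=(\Sigma_{n-1}-\Sigma_n)/(\Sigma_n-\Sigma_{n+1})$. By \eqref{recursive},
\[
D=\frac{\text{(inner)}^2}{\Sigma_{n-1}\Sigma_{n+1}(\Sigma_n+(M-1)\Sigma_{n+1})^2}.
\]
Substituting this into \eqref{D_ineq} reduces the claim to
\[
\text{(inner)}\,(\Sigma_n+(M-1)\Sigma_{n+1})\le M^2\Sigma_n\Sigma_{n-1}\Sigma_{n+1}.
\]
I expect this to follow from $\Sigma_n<\Sigma_{n+1}^+$, which reads $\Sigma_n^2-((M+1)\Sigma_{n+1}-1)\Sigma_n<\Sigma_{n+1}$, together with $\Sigma_{n-1}>\Sigma_n$, after expanding. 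This bookkeeping is the main obstacle of the lemma. My first attempt would be to bound $\text{(inner)}\le M\Sigma_n\Sigma_{n-1}-\Sigma_{n-1}(\Sigma_{n-1}-\Sigma_n)$ and then use the quadratic constraint.

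\emph{Part 4.} The recursion for $I_{22,n}$ runs backward from $I_{22,N}=0$ and has positive denominators whenever $I_{22,n}\ge0$; note $c_0<0$, so the first term of the denominator needs care. It is safer to show the bound \eqref{SOC1} inductively: if $I_{22,n}$ obeys \eqref{SOC1}, then $4c_0I_{22,n}\Sigma_{n-1}\Sigma_n(\Sigma_{n-1}-\Sigma_n)$ is dominated by the second term, using part 3 to compare consecutive steps. So the denominator is positive, $I_{22,n-1}>0$, and verifying \eqref{SOC1} for $I_{22,n-1}$ reduces, via part 3, to an algebraic inequality. Uniqueness is immediate since the recursion is explicit.
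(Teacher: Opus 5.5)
\textbf{Parts 1, 2 and 4 are essentially fine; Part 3 has a genuine gap, and Part 4 depends on it.}

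In Part 1, your first idea (that the first term of $F$ is decreasing) fails when $(M-1)\Sigma_n>1$. In that case $g(\cdot,\Sigma_n)$, with $g(x,y):=(1+(M+1)x)y-(1+x)x$ your ``inner'' factor, increases on part of $(\Sigma_n,\Sigma_n^+)$. Your fallback is valid, though. Multiply $\frac{d}{dx}\log\frac{x(x-y)}{g(x,y)^2}$ by $x(x-y)g>0$ and write $x=y+a$; the result is $My^3+(M+1)ay^2+ay+3ya^2+2a^3>0$. This gives a unique, simple root in $(y,y^+)$. The paper instead counts the four real roots of the quartic $h(\cdot,y,z)$ (your $F$) from its signs at $y^-$, $y$, $y^+$.

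In Part 4, positivity of the denominator follows directly from \eqref{SOC1} at the previous index: the $c_0$-term is at least minus half the other term. It does not need Part 3. Part 3 is used only in the inductive step, which reduces exactly to \eqref{D_ineq}.

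The gap is in Part 3. After your correct reduction, you must show
$f(x,y,z):=(y+(M-1)z)g(x,y)-M^2xyz\le 0$ for $x=\Sigma_{n-1}$, $y=\Sigma_n$, $z=\Sigma_{n+1}$. This cannot follow from $\Sigma_{n-1}>\Sigma_n$ and $\Sigma_n\le\Sigma_{n+1}^+$, even if you add $\Sigma_{n-1}<\Sigma_n^+$. Indeed $f(y,y,z)=My^2(y-z)>0$, so $f>0$ for every $x$ slightly above $y$. For example, $M=2$, $y=1$, $z=0.9$, $x=1.01$ gives $f\approx 0.16$, although $y<z^+\approx 2.12$.

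The missing idea is to use the recursion a second time, to keep $\Sigma_{n-1}=\tilde x(y,z)$ away from $\Sigma_n$.
\begin{itemize}
\item $f(\cdot,y,z)$ is a downward parabola, positive at $y$ and negative at $y^+$, with a root $\hat x(y,z)\in(y,y^+)$. So you need $\tilde x\ge\hat x$.
\item $h(\cdot,y,z)$ is positive on $(y,\tilde x)$ and negative on $(\tilde x,y^+)$. In your formulation, $x\mapsto x(x-y)/g(x,y)^2$ is increasing and equals $\frac{y-z}{z(y+(M-1)z)^2}$ at $\tilde x$. Either way, $\tilde x\ge\hat x$ is equivalent to $h(\hat x,y,z)\ge 0$.
\item Eliminating $g$ through $f(\hat x,y,z)=0$ turns this into $M^4(y-z)y^2z\,\hat x\ge(\hat x-y)(y+(M-1)z)^4$.
\item The explicit formula for $\hat x$ together with $\sqrt{a+b^2}>b$ reduces this to $M^3y^2z((M+1)y-z)\ge (y+(M-1)z)^4$ on $\frac{y(y+1)}{(M+1)y+1}\le z<y$.
\end{itemize}
This last region is exactly where $y\le z^+$ enters. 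It must be used in non-strict form, since $\Sigma_{N-1}=\Sigma_N^+$. The final inequality holds because the left side minus the right side is concave in $z$. It equals zero at $z=y$, and a direct check using $M\ge 2$ shows it is positive at the left endpoint. Without this argument Part 3, and hence the inductive step of Part 4, remain unproven.
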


\noindent \proof 1: For given  $\Sigma_N>0$, the constant
\begin{align}\label{SigmaM}
\Sigma_{N-1}:=\frac{1}{2} \Big((M+1) \Sigma_N-1+\sqrt{\big((M+1)\Sigma_N-1\big)^2+4\Sigma_N}\Big),
\end{align}
satisfies the first equation in \eqref{recursive} and $\Sigma_{N-1} > \Sigma_N$.

Next, we define functions $h$ and $g$ by
\begin{align}
h(x,y,z)&:=(y-z)g(x,y)^2 - x z (x-y)\Big(y+(M-1)z\Big)^2, \label{h_def}\\
g(x,y)&:=(1+(M+1)x)y-(1+x)x,
\end{align}
for $x,y,z \in \R$. The map $x\mapsto g(x,y)$ is a quadratic function with a negative leading coefficient. 
  For $y>0$, the definition of $y^\pm$ in \eqref{y+_def} gives  the properties $g(y^+,y)=g(y^-,y)=0$, $y^-<0<y<y^+$, and 
\begin{align}
g(y,y) = My^2>0,\quad 
g(x,y)>0 \quad \textrm{for} \quad 0<y\leq x< y^+. \label{g_positive}
\end{align}
For $0<z<y$, the map $x\mapsto h(x,y,z)$ is a fourth degree polynomial with a positive leading coefficient and  satisfies
\begin{align*}
 h(y^-,y,z)&= - y^- z (y^- - y)\big(y+(M-1)z\big)^2<0,\\
h(y,y,z)&=(y-z)M^2y^4 >0,\\
 h(y^+,y,z)&= - y^+ z (y^+ - y)\big(y+(M-1)z\big)^2<0.
\end{align*}  
For $0<z<y$, these inequalities imply that $x\to h(x,y,z)=0$ has exactly one root in each of the four intervals: $(-\infty,y^-)$, $(y^-,y)$, $(y,y^+)$, $(y^+,\infty)$. In particular, the root $\tx = \tx(y,z)\in (y,y^+)$ is simple and ensures $h\big(\tx(y,z),y,z\big)=0$ as well as
\begin{align}
\begin{cases}
h(x,y,z)>0&\textrm{for  } y<x<\tx(y,z),\\
h(x,y,z)<0&\textrm{for  } \tx(y,z)<x<y^+.
\end{cases}
\label{h_negative}
\end{align}
For given $\Sigma_{n}>\Sigma_{n+1}>0$, we define $\Sigma_{n-1}:=\tilde{x}(\Sigma_{n},\Sigma_{n+1})$. Then, the inequalities $\Sigma_n<\Sigma_{n-1}<\Sigma_n^+$ hold.

The point $\Sigma_{N-1}$ defined in \eqref{SigmaM} depends continuously on $\Sigma_N>0$. Theorem 3.5 in Alexanderian (2022) ensures that simple real roots depend continuously on  the polynomial's coefficients. Therefore, $(y,z)\mapsto \tx (y,z)$ is continuous for $0<z<y$. Consequently, 
$\Sigma_{N-2}:= \tx(\Sigma_{N-1},\Sigma_N)$ depends continuously on $\Sigma_N>0$. An induction argument extends this continuity property to all of $\Sigma_0,\Sigma_1,\dots,\Sigma_{N-1}$.\ \\

\noindent 2: As a function of $\Sigma_{N} >0$, the solution $\Sigma_0, \Sigma_1,...,\Sigma_{N-1}$ from the previous step is continuous. To produce $\Sigma_{N}$ such that  $\Sigma_{0}=\frac{M \sigma^2_{\ta}+c_0}{-c_0}$, we consider first 
$\Sigma_{N}:=\frac{M \sigma^2_{\ta}+c_0}{-c_0}$. Because $\Sigma_{n} > \Sigma_{N}$ for all $n<N$, we see that $\Sigma_{0} >\frac{M \sigma^2_{\ta}+c_0}{-c_0}$. 

Because $\Sigma_{N-1}$ is defined as the biggest root of \eqref{recursive}, we have the limit
\begin{align*}
\lim_{\Sigma_{N} \downarrow 0} \Sigma_{N-1} = 0.
\end{align*}
To see 
$$
\lim_{\Sigma_{N} \downarrow 0} \Sigma_{n}=0, \quad n=0,1,...,N-2,
$$ 
we use backward induction. To this end,  we assume 
\begin{align*}
\lim_{\Sigma_{N} \downarrow 0} \Sigma_{N-2} =...=\lim_{\Sigma_{N} \downarrow 0} \Sigma_{n}  =0,
\end{align*}
and we seek to prove $\lim_{\Sigma_{N} \downarrow 0} \Sigma_{n-1}  =0$. The previous step ensured  $\Sigma_{n-1} \in (\Sigma_{n}, \Sigma^+_{n})$. Because $y^+$ defined in \eqref{y+_def} has $\lim_{y\downarrow 0} y^+=0$, we have
$$
\lim_{\Sigma_{N} \downarrow 0} \Sigma_{n-1} =0.
$$

All in all, because $\Sigma_{0}$ depends continuously on $\Sigma_{N}$, the interval $(0, \frac{M \sigma^2_{\ta}+c_0}{-c_0}]$ is contained in the image of the function 
$$
(0, \infty)\ni\Sigma_{N} \to \Sigma_{0},
$$
and the second claim follows.\ \\

\noindent 3: The definition of $y^+$ in \eqref{y+_def} implies the equivalence 
\begin{align}\label{keyineq1}
0<z<y \leq z^+ \quad \iff \quad y>0 \quad \textrm{and} \quad \frac{y(y+1)}{(M+1)y+1}\leq z<y.
\end{align}

Let $n\in \{1,2,\dots,N-1\}$. Using \eqref{recursive}, we can rewrite \eqref{D_ineq} as
\begin{align}\label{hypo} 
\frac{ g(\Sigma_{n-1},\Sigma_{n})^2 }{  \Sigma_{n-1}\Sigma_{n+1}\big(\Sigma_n + (M-1)\Sigma_{n+1}\big)^2   } \le \frac{M^4 \Sigma_n^2 \Sigma_{n-1} \Sigma_{n+1}}{(\Sigma_{n}+(M-1)\Sigma_{n+1})^4}.
\end{align}
By taking square roots and using that $g(\Sigma_{n-1},\Sigma_{n})\ge0$ for  $\Sigma_n<\Sigma_{n-1} \leq  \Sigma_n^+$, the inequality \eqref{hypo} is equivalent to
\begin{align}
\big(\Sigma_n + (M-1)\Sigma_{n+1}\big) g(\Sigma_{n-1},\Sigma_{n}) \leq M^2 \Sigma_n \Sigma_{n-1} \Sigma_{n+1}. \label{hypo1}
\end{align}

To show that \eqref{hypo1} holds, we define the function 
\begin{align}
 f(x,y,z):=\big(y+(M-1)z\big)g(x,y)-M^2 xyz,\quad x\ge y \ge z > 0.
\end{align}
Let $\tx =\tx(y,z)$ be as in the proof of 1. The inequality in \eqref{hypo1} follows as soon as we show
\begin{align}
&f\big(\tx(y,z),y,z\big)\leq 0 \quad \textrm{for} \quad 0<z<y \leq z^+. \label{hypo2}
\end{align}
For $0<z<y$ fixed, the map $\R\ni x\mapsto f(x,y,z)$ is a quadratic function with its legs pointing down and
\begin{align*}
&f(y,y,z)=My^2(y-z)>0,\\
&f(y^+,y,z)=-M^2 y^+ y z<0.
\end{align*}
Therefore, for $0<z<y$ fixed, there exists a unique root $\hx=\hx(y,z)\in (y,y^+)$ satisfying $f(\hat x(y,z),y,z)=0$ and
\begin{align}
f(x,y,z) \leq 0 \quad \textrm{for} \quad \hx(y,z) \leq x<y^+ \textrm{  and  }0<z<y \leq z^+. \label{f_ineq}
\end{align}
The explicit expression of $\hx$ is given by
\begin{align}
\begin{split}
&\hx(y,z):=\Big((M+1)y^2 +(1-M)z-y(1+z) \\
&+ \sqrt{4y(y+(M-1)z)^2 +\big((M+1)y^2 +(1-M)z-y(1+z)\big)^2 }\Big)\big/2\big(y+(M-1)z\big). \label{hx}
\end{split}
\end{align} 
Because of \eqref{f_ineq}, it is enough to prove that
\begin{align}
\hat x(y,z) \leq \tx(y,z) \quad \textrm{for all}\quad 0<z<y \leq z^+ \label{hx<tx}
\end{align}
to justify \eqref{hypo2}.
To prove \eqref{hx<tx},  we use \eqref{h_negative} and $\hat x(y,z)\in (y,y^+)$ to see that it suffices to prove 
\begin{align}
h\big(\hat x(y,z),y,z\big)\geq 0 \quad \textrm{for all}\quad 0<z<y\leq z^+. \label{hypo3}
\end{align}
 Because $f(\hx, y,z)=0$, we have for $0<z<y$
\begin{align}
\begin{split}
h(\hx,y,z)&=(y-z) \left( \frac{M^2 \hx y z}{y+(M-1)z }    \right)^2 - \hx z(\hx -y)\big(y+(M-1)z\big)^2\\
&= \frac{\hx z}{(y+(M-1)z)^2 } \left(  M^4   (y-z) y^2 z \,\hx - (\hx -y)\big(y+(M-1)z\big)^4 \right).
\end{split}
\end{align}
Therefore, to prove \eqref{hypo3}, it suffices to prove
\begin{align}
  M^4   (y-z) y^2 z \,\hx  \geq (\hx -y)\big(y+(M-1)z\big)^4   \quad \textrm{for all}\quad 0<z<y \leq z^+. \label{hypo4}
\end{align}
The explicit expression of $\hx$ in \eqref{hx} gives
\begin{align*}
&\frac{M^4   (y-z) y^2 z }{\big(y+(M-1)z\big)^4}  \frac{\hx}{\hx-y} \\
&=\bigg(M^3 y z \Big( y+(M+1)y^2-z+Mz-yz\\
&\quad + \sqrt{4y(y+(M-1)z)^2 + \left(y(y+My-1)+z-(M+y)z\right)^2} \Big)\bigg)\Big/2\big(y+(M-1)z\big)^4\\
&> \frac{M^3 y z \Big( y+(M+1)y^2-z+Mz-yz+ \left(y(y+My-1)+z-(M+y)z\right)\Big)}{2\big(y+(M-1)z\big)^4}\\
&=\frac{M^3  y^2 z ((M+1)y-z) }{\big(y+(M-1)z\big)^4},\quad 0<z<y \leq z^+,
\end{align*}
where the inequality uses $\sqrt{a+b^2}>b$ for $a>0$ and $b\in \R$. From \eqref{keyineq1}, we see that to prove \eqref{hypo4}, it suffices to prove 
\begin{align}
M^3  y^2 z \big((M+1)y-z\big)-\big(y+(M-1)z\big)^4 \geq 0   \textrm{ for } y>0 \text{ and } \tfrac{y(y+1)}{(M+1)y+1}\leq z<y. \label{hypo5}
\end{align}
The left-hand side of \eqref{hypo5} is a strictly concave function in $z$. Therefore, to prove  \eqref{hypo5}, it suffices to observe
\begin{align*}
&M^3  y^2 z \big((M+1)y-z\big)-\big(y+(M-1)z\big)^4\Big|_{z=\tfrac{y(y+1)}{(M+1)y+1}}\\
& =\frac{M^4 y^4 \Big((1+y)\big(1+(M+1)y\big)^2\big(1+(M+2)y\big)-(1+2y)^4\Big)}{(1+(M+1)y)^4}>0,\\
&M^3  y^2 z ((M+1)y-z)-\big(y+(M-1)z\big)^4\Big|_{z=y} =0.
\end{align*}
Here, the inequality is due to $M\ge2$, which ensures
\begin{align*}
(1+y)(1+(M+1)y)^2(1+(M+2)y)-(1+2y)^4 
&> (1+y)(1+3y)^3-(1+2y)^4\\
&=y\big(2+12y+22y^2+11y^3\big)>0.
\end{align*}

\noindent 4: For $n=N$, the inequality \eqref{SOC1} clearly holds. 
For $n<N$, we use backward induction. Assume that \eqref{SOC1} holds at $n+1$, that is, assume
\begin{align}\label{SOC2}
I_{22,n+1} \le \frac{M^{3/2}\sigma\sqrt{\Sigma_{n}}}{2\sqrt{-c_0\Sigma_{n+1}(\Sigma_{n}-\Sigma_{n+1}})}.
\end{align}
We seek to prove that \eqref{SOC1} holds. Using $I_{22}$'s recursion, the inequality in  \eqref{SOC1} is equivalent to
\begin{align*}
&\frac{M \sigma ^2 \big((M-1) \Sigma_{n+1}+\Sigma_{n}\big)^2}{4 c_0 I_{22,{n+1}} \Sigma_{n} \Sigma_{n+1} (\Sigma_{n}-\Sigma_{n+1})+4 M^{3/2} \sigma  \Sigma_{n}^{3/2} \sqrt{\Sigma_{n+1}} \sqrt{-c_0 (\Sigma_{n}-\Sigma_{n+1})}}\\
&\le\frac{M^{3/2}\sigma\sqrt{\Sigma_{n-1}}}{2\sqrt{-c_0\Sigma_n(\Sigma_{n-1}-\Sigma_n})}.
\end{align*}
Because  $I_{22,{n+1}}$ is assumed to satisfy \eqref{SOC2} and $c_0<0$, it suffices to show the inequality
$$
\frac{(\Sigma_{n}+ (M-1)\Sigma_{n+1})^2}{\sqrt{\Sigma_{n+1}} \sqrt{\Sigma_{n}-\Sigma_{n+1}}}\le \frac{M^2\Sigma_{n} \sqrt{\Sigma_{n-1}}}{\sqrt{\Sigma_{n-1}-\Sigma_{n}}},
$$
which holds by \eqref{D_ineq}.
$\endproof$

\subsection{Kalman filtering}

\begin{lemma}\label{lem_Kalman} In the setting of Theorem \ref{thm_Main}, we  set $c_0 := (M-1)(\rho- \sigma^2_{\ta})<0$, let $(\Sigma_{n})_{n=0}^N$ be as in Lemma \ref{lem_difference}, let $(\Sigma_{2,n})_{n=0}^N$ be defined in \eqref{S2intermsofS1}, and define
\begin{align}\label{S2intermsofS1b}
\Sigma_{1,n} &:=-\frac{c_0(1+\Sigma_{n})}M, \quad \Sigma_{3,n}:= M \Sigma_{1,n} + c_0,\quad n=0,1,...,N.
\end{align}
Then, we have the representations
\begin{align}\label{Sigmas}
\begin{split}
\Sigma_{1,n} &= \E\big[(\ta_i - \hat{t}_n)^2\big],\quad n=0,1,...,N,\\
\Sigma_{2,n} &= \E\Big[(\ta_i - \hat{t}_n)\sum_ {j=1}^M (\ta_j - \hat{t}_n)\Big],\quad n=0,1,...,N,\\
\Sigma_{3,n} &= \E\big[(\tv - \hat{p}_n)(\ta_i - \hat{t}_n)\big],\quad n=0,1,...,N,\\
\hp_n &= \E\big[\tv|\sigma(\hat{y}_1,...,\hy_n)\big] = M\hat{t}_n,\quad n=1,...,N.
\end{split}
\end{align}
\end{lemma}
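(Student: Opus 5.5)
Throughout we write $s_n:=\Sigma_{2,n-1}-\Sigma_{2,n}$. The proof is a forward induction on $n$ that runs the one-step Gaussian (Kalman) update for the vector $(\ta_1-\hat t_n,\dots,\ta_M-\hat t_n)$ along the linear dynamics \eqref{hatprocesses} with the coefficients \eqref{S2intermsofS1}.

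\textbf{Base case $n=0$.} Here $\hat t_0=\hp_0=0$. The first claim reads $\E[\ta_i^2]=\sigma_{\ta}^2$, and indeed
\[
-\tfrac{c_0(1+\Sigma_0)}{M}=\tfrac{-c_0+M\sigma_{\ta}^2+c_0}{M}=\sigma_{\ta}^2
\]
by the definition of $\Sigma_0$ in Theorem \ref{thm_Main}. For the second claim,
\[
\E\Big[\ta_i\sum_j\ta_j\Big]=\sigma_{\ta}^2+(M-1)\rho=M\sigma_{\ta}^2+c_0 ,
\]
which must equal $\Sigma_{2,0}=-c_0\Sigma_0$. This is again the definition of $\Sigma_0$. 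The third claim follows the same way, since $\E[\tv\ta_i]$ is the same quantity and $\Sigma_{3,0}=M\Sigma_{1,0}+c_0=M\sigma_{\ta}^2+c_0$. I also record the identity $\Sigma_{1,n}-\Sigma_{2,n}/M=-c_0/M$. It says that the idiosyncratic part $\E[(\ta_i-\bar a)^2]$ of the signals, with $\bar a:=\tfrac1M\sum_j\ta_j$, is never learned: the filter only sees the symmetric component $\bar a$.

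\textbf{Inductive step.} This is where the symmetry pays off. Since $\Delta\hy_n=M\beta_n(\bar a-\hat t_{n-1})+\Delta w_n$, the order flow only carries information about $\bar a$. So I would show inductively that $\hat t_n=\E[\bar a\mid\sigma(\hy_1,\dots,\hy_n)]$. Then
\[
\hp_n=\E[\tv\mid\cdot\,]=M\hat t_n ,
\]
and by the tower property this also equals $\E[\ta_i\mid\cdot\,]$. This yields the fourth claim, once we check that $\lambda_n=M\zeta_n$ and that $\zeta_n$ is the correct projection coefficient.

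Write $V_{n}:=\E[(\bar a-\hat t_{n})^2]=\Sigma_{2,n}/M$. The innovation $\Delta\hy_n$ is independent of the past observations, with variance $M^2\beta_n^2V_{n-1}+\sigma^2$. The projection coefficient is therefore
\[
\zeta_n=\frac{M\beta_nV_{n-1}}{M^2\beta_n^2V_{n-1}+\sigma^2},
\]
and the posterior variance is $V_n=V_{n-1}\sigma^2/(M^2\beta_n^2V_{n-1}+\sigma^2)$. Inserting $\beta_n^2=\sigma^2 s_n/(M\Sigma_{2,n-1}\Sigma_{2,n})$ gives
\[
M^2\beta_n^2V_{n-1}+\sigma^2=\frac{\sigma^2\Sigma_{2,n-1}}{\Sigma_{2,n}} .
\]
Hence $V_n=\Sigma_{2,n}/M$, consistent with the induction hypothesis, and $\zeta_n=\beta_n\Sigma_{2,n}/\sigma^2$, as in \eqref{S2intermsofS1}.

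The remaining covariances follow from $\ta_i-\hat t_n=(\ta_i-\bar a)+(\bar a-\hat t_n)$. The first piece has the constant variance $-c_0/M$ and is orthogonal to all observations. This gives $\Sigma_{1,n}=-c_0/M+\Sigma_{2,n}/M$, which matches \eqref{S2intermsofS1b} after using $\Sigma_{2,n}=-c_0\Sigma_n$. Also
\[
\E\Big[(\ta_i-\hat t_n)\sum_j(\ta_j-\hat t_n)\Big]=M V_n=\Sigma_{2,n},
\]
and
\[
\Sigma_{3,n}=\E[M(\bar a-\hat t_n)(\ta_i-\hat t_n)]=MV_n=\Sigma_{2,n}=M\Sigma_{1,n}+c_0 .
\]

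\textbf{Main obstacle.} The main obstacle is making the claim $\hat t_n=\E[\bar a\mid\cdot\,]$ rigorous. The key points are that $\bar a-\hat t_{n-1}$ is orthogonal to the past $\hy$'s, so that $\Delta\hy_n$ is a genuine innovation, and that all variables are jointly Gaussian. Both follow from the linear construction in \eqref{hatprocesses} together with the positivity $\Sigma_{2,n-1}>\Sigma_{2,n}>0$, which holds by \eqref{Bellman2a}. Once that is in place, the induction closes by the algebra above.
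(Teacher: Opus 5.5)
Your proof is correct, but it organizes the induction differently from the paper.

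\textbf{The paper's route.} The paper never introduces $\bar a$. It expands $\V[\ta_i-\hat t_{n-1}-\zeta_n\Delta\hy_n]$, and the corresponding covariance with $\sum_k(\ta_k-\hat t_n)$, directly along the dynamics \eqref{hatprocesses}. It then checks algebraically that these two recursions reproduce $\Sigma_{1,n}$ and $\Sigma_{2,n}$, and reads off $\Sigma_{3,n}$ from $\tv=\sum_i\ta_i$. Only at the end does it project $\tv-\hp_{n-1}$ onto the innovation $\Delta\hy_n$ to identify $\lambda_n$, and $\zeta_n$ ``in a similar manner''.

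\textbf{Your route.} You split $\ta_i-\hat t_n=(\ta_i-\bar a)+(\bar a-\hat t_n)$. Then only one scalar Riccati step for $V_n$ has to be verified. All three covariance identities follow from $V_n=\Sigma_{2,n}/M$ together with the constant idiosyncratic variance $-c_0/M$.

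\textbf{What each buys.} Your decomposition explains why $\Sigma_{1,n}-\Sigma_{2,n}/M$ stays constant and why $\Sigma_{3,n}=\Sigma_{2,n}$. It also makes explicit the identity $\hat t_n=\E[\ta_j|\sigma(\hy_1,\dots,\hy_n)]$, which the paper invokes in the following lemma. The paper's computation is more mechanical, but it needs no identification of a sufficient statistic.

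\textbf{Two small points to tighten.} First, the orthogonality of $\ta_i-\bar a$ to all observations rests on $\E[(\ta_i-\bar a)\bar a]=0$. This is the exchangeability identity $\E[\ta_i\bar a]=\E[\bar a^2]$, and you should state it, since the observations are linear in $\bar a$ and the noise. Second, the equality $\E[\ta_i|\sigma(\hy_1,\dots,\hy_n)]=\E[\bar a|\sigma(\hy_1,\dots,\hy_n)]$ comes from that orthogonality plus joint normality, not from the tower property.
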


\proof We start by expressing $\Delta \hat y_n$ in \eqref{hatprocesses} as
\begin{align*}
\Delta \hat{y}_n &= \sum_ {i=1}^M\beta_n(\ta_i -\hat{t}_{n-1})+ \Delta w_n.
\end{align*}

For $n=0$,  the representations in \eqref{Sigmas} hold. We proceed by forward induction and assume \eqref{Sigmas} holds for $n-1$ and prove that \eqref{Sigmas} also holds for $n$.  We have the variance
\begin{align}
 \V[\ta_i - \hat{t}_{n-1} -\Delta \hat{t}_n]&= \V[\ta_i - \hat{t}_{n-1} -\zeta_n \Delta \hat{y}_n] \nonumber\\
 &= \V[\ta_i - \hat{t}_{n-1} -\zeta_n \beta_n \sum_{j=1}^M (\ta_j - \hat{t}_{n-1}) - \zeta_n \Delta w_n] \nonumber\\
&= \Sigma_{1,n-1} + \zeta^2_n\beta_n^2 M \Sigma_{2,n-1} -2\zeta_n\beta_n \Sigma_{2,n-1} + \zeta_n^2\sigma^2 \nonumber\\
& = \Sigma_{1,n}, \label{Sigma1}
\end{align}
where the last equality follows from the expressions of $(\Sigma_{2,n-1}, \beta_n,\zeta_n)$ in \eqref{S2intermsofS1}. 
Similarly, we have the covariance 
\begin{align}
& \E\Big[\big(\ta_i - \hat{t}_{n-1} -\Delta \hat{t}_n\big)\times \sum_{k=1}^M\big(\ta_k - \hat{t}_{n-1} - \Delta \hat t_n\big)\Big]  \nonumber  \\
&= \E\Big[\Big(\ta_i - \hat{t}_{n-1} -\zeta_n \beta_n \sum_{j=1}^M (\ta_j - \hat{t}_{n-1}) - \zeta_n \Delta w_n\Big)  \nonumber  \\
&\quad \quad \times \sum_{k=1}^M\Big(\ta_k - \hat{t}_{n-1} -\zeta_n \beta_n \sum_{l=1}^M (\ta_l - \hat{t}_{n-1}) - \zeta_n \Delta w_n\Big)\Big]  \nonumber \\
&= \E\Big[\Big(\ta_i - \hat{t}_{n-1} -\zeta_n \beta_n \sum_{j=1}^M (\ta_j - \hat{t}_{n-1}) - \zeta_n \Delta w_n\Big)  \nonumber \\
&\quad \quad \times\Big( \sum_{k=1}^M(\ta_k - \hat{t}_{n-1}) -\zeta_n M \beta_n \sum_{l=1}^M (\ta_l - \hat{t}_{n-1}) - \zeta_n M \Delta w_n\Big)\Big]  \nonumber \\
&= \Sigma_{2,n-1} -\zeta_n M \beta_n \Sigma_{2,n-1} -\zeta_n M \beta_n \Sigma_{2,n-1} +\zeta^2_n M^2 \beta^2_n \Sigma_{2,n-1} + M\zeta^2_n\sigma^2 \nonumber\\
&=\Sigma_{2,n}, \label{Sigma2}
\end{align}
where the last equality follows from the expressions of $(\Sigma_{2,n-1}, \beta_n,\zeta_n)$ in \eqref{S2intermsofS1}. 
The representation of $\Sigma_{3,n}$ follows from \eqref{v}. 

To compute the pricing coefficients,  we start with the variance of $\Delta \hat y_n$. We have
\begin{align*}
\V[ \Delta \hy_n ] &= \beta_n^2\V\Big[\sum_{j=1}^M (\ta_j - \hat{t}_{n-1})\Big] + \sigma^2\\
&= \beta_n^2 M\Sigma_{2,n-1}+ \sigma^2.
\end{align*}
The dynamics of the price process are
\begin{align}
\Delta \hp_n &= \E[\tv |\sigma(\hy_1,...,\hy_{n})]- \hat{p}_{n-1} \nonumber \\
&= \E[\tv - \hat{p}_{n-1} |\sigma(\Delta\hy_1,...,\Delta\hy_{n})] \nonumber\\
&= \E[\tv - \hat{p}_{n-1} |\sigma(\Delta\hy_{n})] \nonumber\\
&= \frac{\E[(\tv - \hat{p}_{n-1})\Delta \hy_n]}{\V[\Delta \hy_n]}\Delta\hy_n \nonumber\\
&= \frac{\beta_n\E[(\tv - \hat{p}_{n-1})\sum_{j=1}^M (\ta_j - \hat{t}_{n-1})]}{\beta_n^2 M\Sigma_{2,n-1}+ \sigma^2} \Delta\hy_n \nonumber\\
&= \frac{\beta_n M\Sigma_{3,n-1}}{\beta_n^2 M\Sigma_{2,n-1}+ \sigma^2}\Delta\hy_n. \label{dhpn}
\end{align}
In \eqref{dhpn}, the third equality follows from the mutual independence of $(\Delta\hy_1,...,\Delta\hy_{n-1})$ and $(\Delta \hy_n,\tv - \hat{p}_{n-1})$. The fourth equality in \eqref{dhpn} follows from the projection theorem for Gaussian random variables. Inserting $\Sigma_3 = \Sigma_2$ and $\beta_n$ from \eqref{S2intermsofS1} into the last line in \eqref{dhpn} produces 
$$
 \frac{\beta_n M\Sigma_{3,n-1}}{\beta_n^2 M\Sigma_{2,n-1}+ \sigma^2} = M\frac{\beta_n\Sigma_{2,n}}{\sigma^2} = \lambda_n.
$$
In a similar manner, we find the representation 
\begin{align}\label{FVzeta}
\Delta \hat{t}_n = \zeta_n \Delta \hy_n,\quad \zeta_n = \frac{\beta_n \Sigma_{2,n-1}}{\beta_n^2 M\Sigma_{2,n-1}+ \sigma^2}.
\end{align}
$\endproof$

Informed trader $i$'s innovation process $w_{i,n}$ is defined by $w_{i,0}:=0$ and the below Gaussian increments \eqref{win0}. 
For $n=1,\dots,N$ and $i=1,\dots,M$, we define $\mathcal{F}_n^i$ and $\hat{\mathcal{F}}_n^i$ as
\begin{align}
\mathcal{F}_n^i := \sigma(\ta_i,y_1,...,y_{n-1}), \quad \hat{\mathcal{F}}_n^i :=\sigma(\ta_i,\hat{y}_1,...,\hat{y}_{n-1}).   \label{FF_def}
\end{align}

\begin{lemma} Assume the setting of Lemma \ref{lem_Kalman}. For $i,j =1,...,M$ with $i\neq j$, let $j$'s response to $i$ be as in \eqref{responsej}. The Gaussian random variables
\begin{align}\label{win0}
\begin{split}
\Delta w_{i,n} &:=\Delta \hat{y}_n - \beta_n\frac{\Sigma_{2,n-1}}{\Sigma_{1,n-1}}(\ta_i - \hat{t}_{n-1}),\quad n=1,...,N,
\end{split}
\end{align}
are mutually independent and independent of $\hat{\mathcal{F}}_{n}^i$. Furthermore, for $n=1,...,N$, we have
\begin{align}\label{indpendent1}
\begin{split}
\mathcal{F}_{n}^i=\hat{\mathcal{F}}_{n}^i&=\sigma(\ta_i-\hat{t}_{n-1},\hat{t}_1,...,\hat{t}_{n-1}),\\
\ta_j - \hat{t}_{n-1} \perp \hat{t}_k& \quad \textrm{and}\quad \ta_i-\hat{t}_{n-1} \perp \hat{t}_k,\quad \textrm{for}\quad k\le n-1.
\end{split}
\end{align}
Finally, for arbitrary $\Delta \theta_{i,n} \in \mathcal{F}_{n}^i$, the state processes
\begin{align}\label{Y1Y2}
\begin{split}
Y_{1,n} &:= \tilde{a}_i - \hat{t}_n,\quad Y_{2,n} := \hat{t}_n-t_n,\quad n=0,...,N, 
\end{split}
\end{align}
have the following Markovian dynamics 
\begin{align}\label{markov_dyn}
\begin{split}
\Delta Y_{1,n} 
&= - \zeta_n\Big(\Delta w_{i,n} +\beta_n\frac{\Sigma_{2,n-1}}{\Sigma_{1,n-1}}Y_{1,n-1}\Big),\\
\Delta Y_{2,n} 
&=\zeta_n\Big(\beta_nY_{1,n-1} - \beta_n (M-1) Y_{2,n-1} -\Delta \theta_{i,n}\Big).
\end{split}
\end{align}

\end{lemma}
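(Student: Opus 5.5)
We prove the three claims in order: the innovation property of $\Delta w_{i,n}$, the filtration identities \eqref{indpendent1}, and the dynamics \eqref{markov_dyn}. All random variables in the hatted system are jointly Gaussian and zero mean, so independence reduces to zero covariance.

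\emph{Innovation property.} By induction on $n$, $\hat{\mathcal F}^i_n$ is generated by $\ta_i$ and $\Delta\hy_1,\dots,\Delta\hy_{n-1}$, which are jointly Gaussian. It is therefore enough to show that $\Delta w_{i,n}$ is uncorrelated with $\ta_i$ and with $\Delta \hy_k$ for $k\le n-1$. Writing $\Delta\hy_n=\beta_n\sum_j(\ta_j-\hat t_{n-1})+\Delta w_n$, the coefficient $\beta_n\Sigma_{2,n-1}/\Sigma_{1,n-1}$ in \eqref{win0} is exactly the Gaussian projection coefficient of $\Delta\hy_n$ onto $\ta_i-\hat t_{n-1}$. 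This follows from Lemma \ref{lem_Kalman}, since $\E[(\ta_i-\hat t_{n-1})\sum_j(\ta_j-\hat t_{n-1})]=\Sigma_{2,n-1}$ and $\V[\ta_i-\hat t_{n-1}]=\Sigma_{1,n-1}$.

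The key ingredient is the orthogonality $\ta_j-\hat t_{n-1}\perp \hat t_k$ for $k\le n-1$ and all $j$. It holds because, by Lemma \ref{lem_Kalman}, $\hat t_{n-1}=\hat p_{n-1}/M$ with $\hat p_{n-1}=\E[\tv\mid\hy_1,\dots,\hy_{n-1}]$. By symmetry of the $\ta_j$ and linearity, $\hat t_{n-1}=\E[\ta_j\mid \hy_1,\dots,\hy_{n-1}]$ for every $j$, so each residual $\ta_j-\hat t_{n-1}$ is orthogonal to $\sigma(\hy_1,\dots,\hy_{n-1})\ni\hat t_k$.

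With this in hand:
\begin{itemize}
\item $\Delta w_{i,n}$ is orthogonal to $\ta_i-\hat t_{n-1}$ by construction of the projection coefficient.
\item It is orthogonal to $\hat t_k$ for $k\le n-1$, because $\sum_j(\ta_j-\hat t_{n-1})$, $\ta_i-\hat t_{n-1}$ and $\Delta w_n$ all are.
\item Since $\sigma(\ta_i,\hy_1,\dots,\hy_{n-1})=\sigma(\ta_i-\hat t_{n-1},\hat t_1,\dots,\hat t_{n-1})$, this gives independence from $\hat{\mathcal F}^i_n$.
\end{itemize}
The equality of these two $\sigma$-algebras holds because $\Delta\hat t_k=\zeta_k\Delta\hy_k$ with $\zeta_k\neq0$, so the $\hat t_k$ and $\hy_k$ generate the same information, and $\ta_i$ can be recovered from $\ta_i-\hat t_{n-1}$ together with $\hat t_{n-1}$. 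Mutual independence of the $\Delta w_{i,n}$ follows, since $\Delta w_{i,k}\in\hat{\mathcal F}^i_n$ for $k<n$.

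\emph{Filtration identity $\mathcal F^i_n=\hat{\mathcal F}^i_n$.} We use induction on $n$. Given $\ta_i$ and $y_1,\dots,y_{n-1}$, trader $i$ knows her own orders $\Delta\theta_{i,k}$, and she can reconstruct $t_{k-1}$ recursively through $\Delta t_k=\zeta_k\Delta y_k$. The other traders' orders are $\beta_k(\ta_j-t_{k-1})$, so
\[
\Delta y_k-\Delta\hy_k=\Delta\theta_{i,k}-\beta_k(\ta_i-\hat t_{k-1})-(M-1)\beta_k(t_{k-1}-\hat t_{k-1}),
\]
which is measurable with respect to either information set under the inductive hypothesis. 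Hence each $\hy_k$ can be computed from $y$-information and vice versa.

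\emph{Markovian dynamics \eqref{markov_dyn}.} This is direct algebra. For $Y_1$, we have $\Delta Y_{1,n}=-\zeta_n\Delta\hy_n$, and substituting \eqref{win0} gives the first line. For $Y_2$, we write $\Delta\hat t_n-\Delta t_n=\zeta_n(\Delta\hy_n-\Delta y_n)$ and insert the difference displayed above, with $\ta_i-\hat t_{n-1}=Y_{1,n-1}$ and $t_{n-1}-\hat t_{n-1}=-Y_{2,n-1}$. This yields
\[
\Delta Y_{2,n}=\zeta_n\bigl(\beta_nY_{1,n-1}-(M-1)\beta_nY_{2,n-1}-\Delta\theta_{i,n}\bigr).
\]

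The main obstacle is the orthogonality step, specifically identifying $\hat t_{n-1}$ as the conditional expectation of each individual $\ta_j$ rather than only of their sum. If the symmetry argument above turns out to be too loose, we will instead verify it inductively by covariance computations, in the same style as \eqref{Sigma1}--\eqref{Sigma2}, using the expression \eqref{FVzeta} for $\zeta_n$.
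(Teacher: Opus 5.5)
Your proposal is correct and follows essentially the same route as the paper. You use the same key orthogonality $\ta_j-\hat t_{n-1}\perp\hat t_k$, obtained from $\hat t_{n-1}=\E[\ta_j|\sigma(\hy_1,\dots,\hy_{n-1})]$, together with the same generator identity via $\zeta_k\neq 0$, zero covariance plus joint normality for the innovations, induction for $\mathcal F^i_n=\hat{\mathcal F}^i_n$, and direct algebra for \eqref{markov_dyn}. The differences are minor: you check covariances against the generators instead of using $\E[\Delta w_{i,n}|\hat{\mathcal F}^i_n]=0$, your exchangeability argument supplies the per-$j$ identity that the paper attributes to Lemma \ref{lem_Kalman}, and your induction on $\Delta y_k-\Delta\hy_k$ makes the paper's ``general case is similar'' precise.
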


\proof To see the second $\sigma$-algebra equality in \eqref{indpendent1}, we use $\zeta_k \neq 0$ to see that
$$
\sigma(\ta_i,\Delta \hat{y}_1,...,\Delta \hat{y}_{n-1})=\sigma(\ta_i,\Delta \hat{t}_1,...,\Delta \hat{t}_{n-1}),
$$
which implies the claim. The independence properties in \eqref{indpendent1} follow from iterated expectations 
\begin{align*}
\E[(\ta_j - \hat{t}_{n-1})\hat{t}_{k}] &= \E\Big[\E\big[(\ta_j - \hat{t}_{n-1})\hat{t}_{k}|\sigma(\hy_1,...,\hy_{n-1})\big]\Big] \\
&= \E\Big[\hat{t}_{k}\E\big[\ta_j - \hat{t}_{n-1} |\sigma(\hy_1,...,\hy_{n-1})\big]\Big]\\
&=0,
\end{align*}
where the last equality uses that Lemma \ref{lem_Kalman} gives $\hat t_n = \E[\ta_j|\sigma(\hy_1,...,\hy_n)]$. For $j\neq i$, these properties and the projection theorem for Gaussian random variables give
\begin{align}
\sum_{j=1}^M\E[\ta_j - \hat{t}_{n-1}|\hat{\mathcal{F}}_{n}^i]&=\sum_{j=1}^M\E[\ta_j - \hat{t}_{n-1}|\sigma(\ta_i-\hat{t}_{n-1},\hat{t}_1,...,\hat{t}_{n-1})] \nonumber\\
&=\sum_{j=1}^M\E[\ta_j - \hat{t}_{n-1}|\sigma(\ta_i-\hat{t}_{n-1})]  \nonumber\\
&=\frac{\Sigma_{2,n-1}}{\Sigma_{1,n-1}}(\ta_i - \hat{t}_{n-1}).  \label{projection_Gaus1}
\end{align}

Informed trader $i$'s innovation process  \eqref{win0} has the representation
\begin{align}
\beta_n\sum_{j=1}^M\Big( \ta_j - \hat{t}_{n-1} - \E\big[\ta_j - \hat{t}_{n-1}|\hat{\mathcal{F}}_{n}^i\big]\Big) +\Delta w_n &= \Delta \hat{y}_n - \beta_n\sum_{j=1}^M \E\big[\ta_j - \hat{t}_{n-1}|\hat{\mathcal{F}}_{n}^i\big] \nonumber\\
&= \Delta \hat{y}_n - \beta_n\frac{\Sigma_{2,n-1}}{\Sigma_{1,n-1}}(\ta_i - \hat{t}_{n-1}) \nonumber \\
&=\Delta w_{i,n}. \label{win}
\end{align}
Therefore, iterated expectations give for $0< m <n \le N$
\begin{align}
\E[\Delta w_{i,n}\Delta w_{i,m}] &= \E\big[\E[\Delta w_{i,n}\Delta w_{i,m}|\hat{\mathcal{F}}_{m+1}^i]\big] \nonumber\\
&= \E\big[\Delta w_{i,m}\E[\Delta w_{i,n}|\hat{\mathcal{F}}_{m+1}^i]\big] \nonumber\\
&=0.  \label{win2}
\end{align}
Furthermore, joint normality and \eqref{win2} give the independence $\Delta w_{i,n} \perp ( \Delta w_{i,1} ,...,\Delta w_{i,n-1} )$. For $A \in \{\ta^i, \hy_1,...,\hy_{n-1}\}$, iterated expectations give the zero-covariance
\begin{align*}
\E[\Delta w_{i,n} A] &= \E\Big[\E[\Delta w_{i,n} A|\hat{\mathcal{F}}_{n}^i]\Big] \\
&= \E\Big[A \, \E[\Delta w_{i,n} |\hat{\mathcal{F}}_{n}^i]\Big]\\
&=0.
\end{align*}
Then, the independence $\Delta w_{i,n} \perp \hat{\mathcal{F}}_{n}^i$ follows from joint normality. 

We show the first $\sigma$-algebra equality in \eqref{indpendent1} using induction. The basis step holds because
\begin{align*}
\sigma(\ta_i,y_1) &= \sigma(\ta_i, \beta_1 \sum_{j=1,j\neq i}^M  \ta_j +\Delta \theta_{i,1}+\Delta w_1)\\
&= \sigma(\ta_i, \beta_1 \sum_{j=1,j\neq i}^M  \ta_j +\Delta w_1)\\
&=\sigma(\ta_i,\hy_1).
\end{align*}
The next step follows similarly: 
\begin{align*}
\sigma(\ta_i,\hy_1,\hy_2) & =\sigma(\ta_i,\Delta\hat{y}_1,\Delta \hy_2) \\
& =\sigma(\ta_i,\beta_1\sum_{j=1}^M \tilde{a}_j+\Delta w_1,\beta_2\sum_{j=1}^M ( \tilde{a}_j-\hat{t}_1)+\Delta w_2) \\
& =\sigma(\ta_i,\beta_1\sum_{j=1,j\neq i}^M \tilde{a}_j+\Delta w_1,\beta_2\sum_{j=1,j\neq i}^M \tilde{a}_j+\Delta w_2).
\end{align*}
On the other hand, we have
\begin{align*}
\sigma(\ta_i,y_1,y_2) & =\sigma(\ta_i,\Delta y_1,\Delta y_2) \\
& =\sigma(\ta_i,\beta_1\sum_{j=1,j\neq i}^M \tilde{a}_j+\Delta \theta_{i,1}+\Delta w_1,\beta_2\sum_{j=1,j\neq i}^M ( \tilde{a}_j-t_1)+\Delta \theta_{i,2}+\Delta w_2) \\
& =\sigma(\ta_i,\beta_1\sum_{j=1,j\neq i}^M \tilde{a}_j+\Delta w_1,\beta_2\sum_{j=1,j\neq i}^M \tilde{a}_j+\Delta w_2).
\end{align*}
The general case is similar.

Finally, the Markovian dynamics \eqref{markov_dyn} follow from trader $j$'s response \eqref{responsej}. 
\begin{align*}
\Delta Y_{1,n} &= - \Delta \hat{t}_n= - \zeta_n\Delta \hat{y}_n\\
&= - \zeta_n\Big(\Delta w_{i,n} +\beta_n\frac{\Sigma_{2,n-1}}{\Sigma_{1,n-1}}Y_{1,n-1}\Big),\\
\Delta Y_{2,n} &= \Delta \hat{t}_n- \Delta t_n= \zeta_n (\Delta \hat{y}_n- \Delta y_n)\\
&=\zeta_n\Big(\beta_n(\ta^i - \hat{t}_{n-1}) + \beta_n (M-1) (t_{n-1} - \hat{t}_{n-1}) -\Delta \theta^i_n\Big)\\
&=\zeta_n\Big(\beta_nY_{1,n-1} - \beta_n (M-1) Y_{2,n-1} -\Delta \theta^i_n\Big).
\end{align*}

$\endproof$

\subsection{Remaining proof of Theorem \ref{thm_Main}}

\emph{Proof of Theorem \ref{thm_Main}:} Let $I_{22,n}$ be as in Lemma \ref{lem_difference}, let $(\Sigma_{n}, \Sigma_{1,n},\Sigma_{2,n},\Sigma_{3,n}, \beta_n,\zeta_n,\lambda_n)$ be as in Lemma \ref{lem_Kalman}, and define the state processes $Y_1$ and $Y_2$ as in \eqref{markov_dyn}.  The inequality in \eqref{SOC1} ensures\footnote{Because we divide by 2 in  \eqref{SOC1}, we have strict inequality in \eqref{realSOC}.}
\begin{align}\label{realSOC}
 I_{22,n} \zeta_n<M.
\end{align}
Therefore, the following recursions are well-defined
\begin{align*}
I_{0,n-1} &:=  I_{0,n}+ I_{11,n} \V[ \Delta w_{i,n} ] \zeta_n^2,\\
I_{11,n-1} &:= \Big(4 \zeta_n I_{11,n} (M-\zeta_n I_{22,n}) (\Sigma_{1,n-1}-\beta_n \zeta_n \Sigma_{2,n-1})^2-2 \zeta_n \Sigma_{1,n-1} \Sigma_{2,n-1}\\
&\quad \times \left(\beta_n \zeta_n^2 I_{12,n}^2+\beta_n \zeta_n I_{12,n} M (\beta_n \zeta_n-1)+I_{12,n}+\beta_n (M-2 \zeta_n I_{22,n}) (\beta_n \zeta_n M-1)\right)\\
&\quad+\zeta_n^2 \Sigma_{1,n-1}^2 (I_{12,n}+\beta_n M)^2+(\beta_n \zeta_n \Sigma_{2,n-1} (\zeta_n I_{12,n}-M)+\Sigma_{2,n-1})^2\Big)\\
&\quad \Big/4 \zeta_n \Sigma_{1,n-1}^2 (M-\zeta_n I_{22,n}),\\
I_{12,n-1} &:= \bigg(\big(1-\beta_n \zeta_n (M-1)\big) \Big(\Sigma_{2,n-1} \big(-\beta_n \zeta_n M \big(\zeta_n (I_{12,n}-2 I_{22,n})+M\big)\\
&\quad -2 \zeta_n I_{22,n}+M\big)+\zeta_n M \Sigma_{1,n-1} (I_{12,n}+\beta_n M)\Big)\bigg)\Big/{2 \zeta_n \Sigma_{1,n-1} (M-\zeta_n I_{22,n})},
\end{align*}
with terminal values 
\begin{align}\label{terminalconds}
I_{0,N}:=I_{11,N}:=I_{12,N}:=0.
\end{align}
The variance of \eqref{win0} appearing of the right-hand side of $I_{0,n-1}$ above is
\begin{align*}
\V[ \Delta w_{i,n} ] &= \V\Big[ \Delta \hy_n -\beta_n\frac{\Sigma_{2,n-1}}{\Sigma_{1,n-1}}(\ta_i -\hat{t}_{n-1}) \Big]\\
&=\sigma^2 +\beta_n^2\V\Big[ \sum_{j=1}^M (\ta_j - \hat{t}_{n-1})-\frac{\Sigma_{2,n-1}}{\Sigma_{1,n-1}}(\ta_i -\hat{t}_{n-1}) \Big]\\
&=\sigma^2 +\beta_n^2\bigg(M\Sigma_{2,n-1}-\frac{\Sigma_{2,n-1}^2}{\Sigma_{1,n-1}}\bigg).
\end{align*}

Next, we prove that the value function corresponding to \eqref{obji} has the quadratic form
\begin{align}\label{FVval}
\begin{split}
\sup_{\substack{
  \Delta \theta_{i,k} \in \mathcal{F}_{k}^i \\
  \E[(\Delta \theta_{i,k})^2] < \infty
}}&\sum_{k=n+1}^N \E[ (\tv -p_k)\Delta \theta_{i,k}|\mathcal{F}_{n}^i] = I_{0,n}+I_{11,n} Y^2_{1,n}+I_{12,n} Y_{1,n}Y_{2,n}+I_{22,n} Y^2_{2,n}.
\end{split}
\end{align}
To see this, we let $\Delta \theta_{i,n} $ be arbitrary and compute
\begin{align}
\E[\Delta \hp_n - \Delta p_n | \mathcal{F}_{n}^i]&=\lambda_n\E\Big[\sum_{j=1}^M \Delta \hat\theta_{j,n} - \sum_{j=1,j\neq i}^M \Delta \theta_{j,n} - \Delta \theta_{i,n} \big|  \mathcal{F}_{n}^i  \Big] \nonumber \\
&=\lambda_n\Big(\E\Big[\sum_{j=1}^M \beta_n(\ta_j - \hat{t}_{n-1}) - \sum_{j=1,j\neq i}^M \beta_n(\ta_j - t_{n-1})  \big|  \mathcal{F}_{n}^i\Big]-\Delta \theta_{i,n}\Big) \nonumber \\
&=\lambda_n\Big(\beta_n(\ta_i - \hat{t}_{n-1}) + \beta_n (M-1) (t_{n-1} - \hat{t}_{n-1}) -\Delta \theta_{i,n}\Big).  \label{p1}
\end{align}
The definition of trader $i$'s innovation process $w_{i,n}$ in \eqref{win0} produces the representation
\begin{align}\label{p2}
\begin{split}
\Delta \hp_n &= \lambda_n\Delta \hy_n\\
&= \lambda_n\Big(\Delta w_{i,n} +\beta_n\frac{\Sigma_{2,n-1}}{\Sigma_{1,n-1}}(\ta_i -\hat{t}_{n-1})\Big).
\end{split}
\end{align}
We use \eqref{p2} to calculate
\begin{align}\label{dpp1a}
\begin{split}
&\E[(\tv-p_n) \Delta \theta_{i,n} | \mathcal{F}_{n}^i]=\Delta \theta_{i,n}\E[(\tv-\hp_{n-1}+\hp_{n-1}-p_{n-1}-\Delta \hp_n+\Delta \hp_n-\Delta p_n)  |  \mathcal{F}_{n}^i],
\end{split}
\end{align}
as follows:
\begin{align*}
\E[\tv-\hp_{n-1}|   \mathcal{F}_{n}^i ] & =\frac{\Sigma_{3,n-1}}{\Sigma_{1,n-1}}(\ta_i - \hat{t}_{n-1}),\\
\E[\hp_{n-1}-p_{n-1}|   \mathcal{F}_{n}^i ] & =M(\hat{t}_{n-1}-t_{n-1}),\\
\E[\Delta \hp_{n}|   \mathcal{F}_{n}^i ] & = \lambda_n\beta_n\frac{\Sigma_{2,n-1}}{\Sigma_{1,n-1}}(\ta_i -\hat{t}_{n-1}).
\end{align*}
By combining these expressions with  \eqref{p1} and using $\Sigma_2=\Sigma_3$, we see that \eqref{dpp1a} becomes
\begin{align}
&\E[(\tv-p_n) \Delta \theta_{i,n} | \mathcal{F}_{n}^i ] \nonumber\\
&=\Delta \theta_{i,n}\bigg(\frac{\Sigma_{2,n-1}}{\Sigma_{1,n-1}}(\ta_i - \hat{t}_{n-1})+M(\hat{t}_{n-1}-t_{n-1})-\lambda_n\beta_n\frac{\Sigma_{2,n-1}}{\Sigma_{1,n-1}}(\ta_i -\hat{t}_{n-1}) \nonumber\\
&\quad +\lambda_n\Big(\beta_n(\ta_i - \hat{t}_{n-1}) + \beta_n (M-1) (t_{n-1} - \hat{t}_{n-1}) -\Delta \theta_{i,n}\Big)\bigg) \nonumber\\
&= \Delta \theta_{i,n}\Big(\frac{\Sigma_{2,n-1}}{\Sigma_{1,n-1}}(1-\lambda_n\beta_n)+\lambda_n\beta_n\Big)Y_{1,n-1} \nonumber\\
&\quad+\Delta \theta_{i,n}\big(M-\lambda_n\beta_n(M-1)\big)Y_{2,n-1}-\lambda_n (\Delta \theta_{i,n})^2. \label{dpp1b}
\end{align}
Based on \eqref{dpp1b}, we have
\begin{align}
&\E[ (\tv-p_n) \Delta \theta_{i,n} +  I_{0,n}+I_{11,n} Y^2_{1,n}+I_{12,n} Y_{1,n}Y_{2,n}+I_{22,n} Y^2_{2,n}| \mathcal{F}_{n}^i] \nonumber\\
&= \Delta \theta_{i,n}\Big(\frac{\Sigma_{2,n-1}}{\Sigma_{1,n-1}}(1-\lambda_n\beta_n)+\lambda_n\beta_n\Big)Y_{1,n-1} \nonumber\\
&\quad +\Delta \theta_{i,n}\big(M-\lambda_n\beta_n(M-1)\big)Y_{2,n-1}-\lambda_n (\Delta \theta_{i,n})^2 \nonumber \\
&\quad + I_{0,n}+I_{11,n}  \zeta^2_n\V[\Delta w_{i,n}]+ I_{11,n} \Big(Y_{1,n-1}  - \zeta_n\beta_n\frac{\Sigma_{2,n-1}}{\Sigma_{1,n-1}}Y_{1,n-1}\Big)^2 \nonumber\\
&\quad +I_{12,n}  \Big(Y_{1,n-1}  - \zeta_n\beta_n\frac{\Sigma_{2,n-1}}{\Sigma_{1,n-1}}Y_{1,n-1}\Big)\Big(Y_{2,n-1}+\zeta_n\Big(\beta_nY_{1,n-1} - \beta_n (M-1) Y_{2,n-1} -\Delta \theta_{i,n}\Big)\Big) \nonumber\\
&\quad +I_{22,n} \Big(Y_{2,n-1}+\zeta_n\Big(\beta_nY_{1,n-1} - \beta_n (M-1) Y_{2,n-1} -\Delta \theta_{i,n}\Big)\Big)^2. \label{dpp10}
\end{align}
The second-order condition for \eqref{dpp10} is given by \eqref{realSOC}. Consequently, the optimizer for \eqref{dpp10} is given by
\begin{align}\label{optimize1}
\begin{split}
&\frac{\zeta_n \Sigma_{1,n-1} (I_{12,n}+2 \beta_n \zeta_n I_{22,n}-\beta_n M)+\Sigma_{2,n-1} (\beta_n \zeta_n (M-\zeta_n I_{12,n})-1)}{2 \zeta_n \Sigma_{1,n-1} (\zeta_n I_{22,n}-M)}Y_{1,n-1}\\
&+\frac{(M-2 \zeta_n I_{22,n}) (\beta_n \zeta_n (M-1)-1)}{2 \zeta_n (\zeta_n I_{22,n}-M)}Y_{2,n-1}.
\end{split}
\end{align}

Finally, we turn to establish the equilibrium properties. To get equations for $\beta_n$ and $\alpha_n$, we re-express \eqref{optimali} as
\begin{align}\label{optimaliAA}
\begin{split}
\beta_n (\ta_i - t_{n-1}) + \alpha_n(\hat{t}_{n-1}-t_{n-1}) &= \beta_n Y_{1,n-1} +(\beta_n+\alpha_n) Y_{2,n-1}.
\end{split}
\end{align}
Matching coefficients in \eqref{optimize1} and \eqref{optimaliAA} produces the requirements
\begin{align}\label{optimizeAA1}
\begin{split}
\beta_n&=\frac{\zeta_n \Sigma_{1,n-1} (I_{12,n}+2 \beta_n \zeta_n I_{22,n}-\beta_n M)+\Sigma_{2,n-1} \big(\beta_n \zeta_n (M-\zeta_n I_{12,n})-1\big)}{2 \zeta_n \Sigma_{1,n-1} (\zeta_n I_{22,n}-M)},\\
\beta_n+\alpha_n&=\frac{(M-2 \zeta_n I_{22,n}) \big(\beta_n \zeta_n (M-1)-1\big)}{2 \zeta_n (\zeta_n I_{22,n}-M)}.
\end{split}
\end{align}
To see that the first equation in \eqref{optimizeAA1} holds, we insert the expressions in \eqref{S2intermsofS1} and solve for $I_{12,n}$ to get
\begin{align}\label{I12n0}
\begin{split}
I_{12,n} &= \frac{\sqrt{M} \sigma  \sqrt{\Sigma_{n-1}} \big(\Sigma_{n-1} (M \Sigma_{n}-\Sigma_{n-1}+\Sigma_{n}-1)+\Sigma_{n}\big)}{\Sigma_{n-1} (\Sigma_{n}+1) \sqrt{\Sigma_{n}} \sqrt{-c_0(\Sigma_{n-1}-\Sigma_{n})}}.
\end{split}
\end{align}
The terminal condition $I_{12,N}=0$ in \eqref{terminalconds} holds because of the first property in  \eqref{recursive}. Inserting $I_{12,n-1}$ and $I_{12,n}$ from \eqref{I12n0} into the above recursion for $I_{12,n}$ produces the requirement
\begin{align}\label{I12nnn}
\begin{split}
\frac{\sqrt{\Sigma_{n-2}} \big(\Sigma_{n-2} (M \Sigma_{n-1}+\Sigma_{n-1}-\Sigma_{n-2}-1)+\Sigma_{n-1}\big)}{\Sigma_{n-2}  \sqrt{(\Sigma_{n-2}-\Sigma_{n-1})}}=\frac{\sqrt{\Sigma_{n}} \big((M-1) \Sigma_{n}+\Sigma_{n-1}\big)}{\sqrt{(\Sigma_{n-1}-\Sigma_{n})}}.
\end{split}
\end{align}
By squaring both sides, we see that \eqref{I12nnn} holds because of the recursion in \eqref{recursive}. 

Finally, we use the second equation in \eqref{optimizeAA1} to define $\alpha_n$ as 
\begin{align}
\alpha_n&:=\frac{(M-2 \zeta_n I_{22,n}) \big(\beta_n \zeta_n (M-1)-1\big)}{2 \zeta_n (\zeta_n I_{22,n}-M)} - \beta_n.
\end{align}

$\endproof$

\section{Convergence proofs}\label{sec:Convgence}

\subsection{Recursion estimates}

Throughout this subsection, we set $\Sigma_{N+1}:=0$, which is consistent with the recursion in \eqref{recursive}.

The next result proves that the solution to the recursion in \eqref{recursive} is concave for $n$ sufficiently big and provides a uniform bound on the discrete derivative at $n=0$ in the sense that $\frac{\Sigma_1 - \Sigma_0}{1/N}=-N(\Sigma_0 - \Sigma_1)$ is uniformly bounded with respect to $N$. We are not claiming that the bound $\Sigma_n < \frac{1}{2M^2}$ for concavity is necessary. Indeed, from the ODE \eqref{BCW1}, we see that $S(t) < \frac{1}{2(M-1)}$ is both necessary and sufficient for the continuous-time limiting model.

\begin{lemma}\label{bN_bound} Assume the setting of Theorem \ref{thm_Main} and let $\Sigma_0,...,\Sigma_N$ be as in Theorem \ref{thm_Main}.1. Then, we have
\begin{enumerate}
\item  For $\Sigma_n< \frac{1}{2M^2}$, we have concavity in the sense that
\begin{align}
\Sigma_{n-1} - \Sigma_n < \Sigma_n - \Sigma_{n+1}. \label{difference_concavity}
\end{align}
\item For $N >  2M^2 \Sigma_0-1$, we have the uniform bound
\begin{align}
0<N(\Sigma_0 - \Sigma_1) \leq  \max\left\{ \Sigma_0, (4M^2)^{4(M-1)/M} \Sigma_0^{(5M-4)/M} \right\}. \label{derivative_bound}
\end{align}
\end{enumerate}
\end{lemma}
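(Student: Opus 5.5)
The plan is to handle the two parts separately. Part 1 is a local inequality derived from the recursion \eqref{recursive}. Part 2 is a global counting argument that combines Part 1 with \eqref{D_ineq} from Lemma \ref{lem_difference}.

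\emph{Part 1.} Fix $n$ and write $x:=\Sigma_{n-1}$, $y:=\Sigma_n$, $z:=\Sigma_{n+1}$ (with $z=0$ allowed when $n=N$, where the first line of \eqref{recursive} applies). Set $d:=x-y>0$ and $e:=y-z>0$. A direct expansion gives
\begin{align*}
g(x,y)=Mxy-d(1+x),
\end{align*}
so the recursion reads $e\big(Mxy-d(1+x)\big)^2=xzd\big(y+(M-1)z\big)^2$. Positivity of $g$ on $(y,y^+)$, from \eqref{g_positive}, gives $d(1+x)<Mxy$, hence $d<My x$. Since $x<y^+$ and $y^+=y+My^2+O(y^3)$ for small $y$, the hypothesis $y<\frac1{2M^2}$ keeps $d$ of order $My^2$, which is small relative to $y$.

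I would argue by contradiction: assume $d\ge e$. Since $u\mapsto(Mxy-u(1+x))^2$ is decreasing on the relevant range, the left side is at most $d\big(Mxy-e(1+x)\big)^2$. Dividing by $d$ and substituting $z=y-e$ reduces the claim to showing that
\begin{align*}
\big(Mxy-e(1+x)\big)^2< x(y-e)\big(My-(M-1)e\big)^2
\end{align*}
is impossible to violate, i.e.\ that this strict inequality holds, on the region $0<e\le d<Mxy/(1+x)$, $y<x<y^+$, $y<\frac{1}{2M^2}$. That yields the contradiction.

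This polynomial inequality is where I expect the main difficulty. At $e=0$ the leading terms $Mxy$ and $My\sqrt{xy}$ nearly cancel, because $x\approx y$. The sign is therefore decided by second-order terms, namely the $(1+x)e$ correction against $x-y\le My^2$-type bounds. I would first try to replace $x$ by its extreme values $y$ and $y^+$, using monotonicity in $x$. I would then expand in the small quantity $y$ and use $y<\frac1{2M^2}$ to absorb the cubic remainders.

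\emph{Part 2.} Recall $\Sigma_{N+1}=0$ and the sequence is strictly decreasing. First suppose $\Sigma_0<\frac1{2M^2}$. Then every $\Sigma_n<\frac1{2M^2}$, so Part 1 makes the differences $\Sigma_n-\Sigma_{n+1}$ increasing in $n$. Consequently $\Sigma_0-\Sigma_1\le \frac{\Sigma_0-\Sigma_{N+1}}{N+1}=\frac{\Sigma_0}{N+1}$, which gives the first term of the maximum in \eqref{derivative_bound}.

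Otherwise, let $k\ge1$ be the first index with $\Sigma_k<\frac1{2M^2}$. On $\{k,\dots,N\}$ concavity gives $\Sigma_k-\Sigma_{k+1}\le \frac{\Sigma_k}{N+1-k}$. For $n<k$, I would chain \eqref{D_ineq} backward. Write $u:=\Sigma_{n+1}/\Sigma_n\in(0,1)$ and bound the right side of \eqref{D_ineq} by
\begin{align*}
M^4\,\frac{\Sigma_{n-1}}{\Sigma_n}\cdot\frac{u}{(1+(M-1)u)^4}.
\end{align*}
The factors $\Sigma_{n-1}/\Sigma_n$ telescope over $n$ to $\Sigma_0/\Sigma_k\le 2M^2\Sigma_0$. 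The remaining factors $u/(1+(M-1)u)^4$ must be controlled by a power of the ratios $\Sigma_{n+1}/\Sigma_n$, so that they too telescope. This is what should produce the exponent $4(M-1)/M$ and the constant $(4M^2)^{4(M-1)/M}$.

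Finally, I would bound $k$ itself. The condition $N>2M^2\Sigma_0-1$ should ensure that $N+1-k$ remains comparable to $N$, using the fact that the large-$\Sigma$ steps decrease $\Sigma$ by at least a fixed proportion. Then multiplying by $N$ gives the second term of the maximum. The bookkeeping of these exponents in the chained \eqref{D_ineq} estimate is the secondary obstacle. If the pure telescoping does not close, I would fall back on splitting at a second threshold such as $\Sigma_n<\frac{1}{M^2}$.
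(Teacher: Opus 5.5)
Part 1 has a genuine gap: your relaxation turns the recursion into a polynomial inequality that is false. Bounding $e\,g(x,y)^2$ by $d\,(Mxy-e(1+x))^2$ replaces $g(x,y)=Mxy-d(1+x)$, which vanishes at $x=y^+$, by a quantity that stays of size $Mxy$ when $e\ll d$. The inequality you then need, $(Mxy-e(1+x))^2<x(y-e)(My-(M-1)e)^2$, fails near $e=0$ for every $x\in(y,y^+)$. Letting $e\downarrow 0$ compares $M^2x^2y^2$ with $M^2xy^3$, and $x>y$. For instance, $M=2$, $y=0.1$, $x=0.12$, $e=0.001$ satisfies all your constraints ($y^+\approx 0.1217$, $d=0.02<Mxy/(1+x)\approx 0.0214$), yet gives $0.000523>0.000470$. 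So no expansion in small $y$ can close it. At $n=N$ ($z=0$) it would even require a square to be negative.

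The missing idea is to keep $x$ tied to $(y,z)$ through the sign pattern \eqref{h_negative} of $x\mapsto h(x,y,z)$ on $(y,y^+)$, and to test only the single point $x=2y-z$. The paper does this in three steps.
\begin{itemize}
\item It checks $2y-z<y^+$, using $e\le \frac{My^2}{(M+1)y+1}$ (this is $\Sigma_n\le\Sigma_{n+1}^+$ from \eqref{Bellman2a}) together with \eqref{aux_ineq1}.
\item It proves $h(2y-z,y,z)<0$ via $h(y+a,y,y-a)=a^2\tilde h(y,a)$. Here $\tilde h$ is a cubic in $a$, increasing for $y<\frac1{2M^2}$, and negative at $a=\frac{My^2}{(M+1)y+1}$.
\item It concludes from \eqref{h_negative} that $\Sigma_{n-1}=\tilde x(\Sigma_n,\Sigma_{n+1})<2\Sigma_n-\Sigma_{n+1}$.
\end{itemize}
The case $n=N$ is separate: $\Sigma_N^+<2\Sigma_N$ if and only if $\Sigma_N<\frac1{2(M-1)}$.

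Part 2 has the right skeleton. The tool that makes the factors $M^4u/(1+(M-1)u)^4$ telescope is AM--GM, $\Sigma_k+(M-1)\Sigma_{k+1}\ge M\Sigma_k^{1/M}\Sigma_{k+1}^{(M-1)/M}$. With $D_n:=\Sigma_n-\Sigma_{n+1}$, this gives $D_0\le D_n(\Sigma_0/\Sigma_{n+1})^{4(M-1)/M}$. Two steps fail as written, though.

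First, your telescoped ratio has the inequality reversed: $\Sigma_k<\frac1{2M^2}$ gives $\Sigma_0/\Sigma_k>2M^2\Sigma_0$. A chain ending at $D_{k-1}$ involves $\Sigma_k$, so you need a lower bound on it. That bound comes from concavity: for $k\le N$, $\Sigma_{k-1}=\sum_{n\ge k-1}D_n>(N-k+2)D_{k-1}$, so $\Sigma_k\ge\frac12\Sigma_{k-1}\ge\frac1{4M^2}$. This is where the constant $4M^2$ comes from.

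Second, your reason that $N+1-k$ is comparable to $N$ is false. The differences $D_n$ are of order $1/N$, which is what the lemma asserts, not a fixed proportion of $\Sigma_n$. So $k$ is typically of order $N$, and you give no valid argument keeping it away from $N$.

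The paper avoids bounding $k$ altogether by chaining from $D_0$ to $D_{n^*}$ with $n^*:=\operatorname{argmin}_n D_n$. Averaging gives $(N+1)D_{n^*}\le\Sigma_0$. Concavity gives $n^*\le k-1$, hence $\Sigma_{n^*+1}\ge\Sigma_k\ge\frac1{4M^2}$. The hypothesis $N>2M^2\Sigma_0-1$ is used only when $k=N+1$ (all $\Sigma_n\ge\frac1{2M^2}$), to rule out $n^*=N$, where \eqref{D_ineq} is unavailable.
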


\proof 1. We split into two cases. First, we consider $n=N$. The inequality $\Sigma_{N-1}-\Sigma_N<\Sigma_N - \Sigma_{N+1}$ reduces to $\Sigma_N^+ < 2\Sigma_N$ because $\Sigma_{N-1}=\Sigma_N^+$. This is equivalent to $0<\Sigma_N<\frac{1}{2(M-1)}$. Since $M\geq2$, $\Sigma_N< \frac{1}{2M^2}$ implies the inequality. 

Second, we consider $n\in \{1, \dots, N-1\}$ and suppose that $\Sigma_n< \frac{1}{2M^2}$. We start with the auxiliary inequality
\begin{align}\label{aux_ineq1}
y - \frac{y(1+y)}{(M+1)y+1}<y^+ - y,\quad y>0.
\end{align}
Inserting $y^+$ in \eqref{y+_def} gives us the equivalent inequality
$$
y\big(y+2-(M-2) M y\big)+1< (M y+y+1) \sqrt{(M y+y-1)^2+4 y},\quad y>0.
$$
This inequality clearly holds if the left-hand side is negative. Alternatively, if the left-hand side is positive, we square both sides to obtain the equivalent inequality $M^2 y (2 M y+y+2)>0$, which trivially holds. 

Based on \eqref{aux_ineq1} and the equivalence of $0<\Sigma_n \leq \Sigma_{n+1}^+$ and $\Sigma_{n+1} \geq  \frac{\Sigma_{n} (1+\Sigma_{n} )}{(M+1)\Sigma_{n} +1}$, we have
\begin{align}
\Sigma_{n} - \Sigma_{n+1}\le  \Sigma_{n}  -  \frac{\Sigma_{n} (1+\Sigma_{n} )}{(M+1)\Sigma_{n} +1} < \Sigma_n^+ - \Sigma_{n} .  \label{ys}
\end{align}

For $y>0$ and $a\in (0,y]$, we have
\begin{align}
h(y+a,y,y-a)&=a^2 \tilde h(y,a), \label{th}
\end{align}
where the function $\tilde h$ is defined as
\begin{align*}
 \tilde h(y,a) & := 2M y^2\big(-1+2(M-1)y\big) + (1-My)\big(1-(M-2)y\big)a\\ 
& + 2\big(1-(M^2-1)y\big)a^2 + \big(1+(M-1)^2\big)a^3.   
\end{align*}
For $0<y<\frac{1}{2M^2}$ and $a>0$, the function $\tilde h$ satisfies
\begin{align}
&\frac{\partial \tilde h(y,a)}{\partial a}  = 3(1+(M-1)^2)a^2 + 4(1-(M^2-1)y)a+ (1-My)\big(1-(M-2)y\big)>0, \label{ha>0}\\
&\tilde h(y,\tfrac{M y^2}{(M+1)y+1}) = \frac{M y^2}{\big((M+1)y+1\big)^3} \Big( -1 - 2(3+M)y + (4M^2-13 -10M ) y^2 \nonumber \\
&\qquad\qquad\qquad\qquad + 4(M+1)^2 (2M-3) y^3 + (1+2M)(-4 + M^2 + 2M^3)y^4 \Big) <0. \label{ha<0}
\end{align}
The inequality in \eqref{ha>0} holds because $y<\frac{1}{2M^2}$. The inequality in \eqref{ha<0} follows from 
\begin{align*}
&-1 - 2(3+M)y + (-13 -10M + 4M^2) y^2 \\
&+ 4(M+1)^2 (2M-3) y^3 + (1+2M)(-4 + M^2 + 2M^3)y^4\\
&<-1 +4M^2 y^2 + 32 M^3 y^3 + 9 M^4 y^4 \\
&< - 1 + \frac{1}{M^2} + \frac{4}{M^3} + \frac{9}{16M^4}\\
&<0,
\end{align*}
where we have used $0<y<\frac{1}{2M^2}$ for the second inequality and $M\geq 2$ for the third inequality. Then, \eqref{th}, \eqref{ha>0}, and \eqref{ha<0} imply
\begin{align}
h(y+a,y,y-a) <0 \quad \textrm{for} \quad 0<a \leq \frac{M y^2}{(M+1)y+1} \quad \textrm{and}\quad 0<y<\frac{1}{2M^2}. \label{ha}
\end{align}

Since $0<\Sigma_n - \Sigma_{n+1} \leq \frac{M \Sigma_n^2}{(M+1)\Sigma_n+1} $,  we use \eqref{ha} with $a:=\Sigma_n - \Sigma_{n+1} $ to see
\begin{align}
h(2\Sigma_n - \Sigma_{n+1}, \Sigma_n,\Sigma_{n+1})<0 \quad \textrm{for}\quad  0<\Sigma_n <\frac{1}{2M^2}. \label{ha2}
\end{align}
Since $\Sigma_n^+ > 2\Sigma_n - \Sigma_{n+1}$ by \eqref{ys}, the inequalities in \eqref{h_negative} and \eqref{ha2} produce \eqref{difference_concavity}.
\ \\

\noindent 2. The lower bound $0 < N(\Sigma_0 - \Sigma_1)$ comes from the monotonicity of $\Sigma_n$. To see the upper bound in \eqref{derivative_bound}, we split into two cases. First, when $N=1$, the upper bound is trivial since $N(\Sigma_0 - \Sigma_1) = \Sigma_0 - \Sigma_1 < \Sigma_0$. 

Second, we assume $N \geq 2$. We set $D_n:=\Sigma_n- \Sigma_{n+1}$ for $0\leq n \leq N$ and note
\begin{align}
D_n>0, \quad \sum_{n=0}^N D_n =  \Sigma_0 - \Sigma_{N+1} = \Sigma_0. \label{D_n}
\end{align}
We define $n_0$ and $n^*$as
\begin{align}
n_0:= \min \left\{n\in \{1,2,\dots,N+1\}: \,\,\Sigma_n < \frac{1}{2M^2}   \right\}, \quad
n^* := \underset{0 \le n \le N}{\operatorname{argmin}}  \,\, D_n. \label{n_def}
\end{align}
Then, \eqref{D_n} implies
\begin{align}
(N+1)D_{n^*} \leq  \Sigma_0. \label{D_n0}
\end{align} 
For $1\leq n \leq N-1$, Lemma~\ref{lem_difference}.3 gives the first inequality in 
\begin{align}\label{D_ineq2}
\begin{split}
D_0 &=D_n \prod_{k=1}^n  \frac{D_{k-1}}{D_k}\\
&\leq D_n \prod_{k=1}^n  \frac{M^4 \Sigma_k^2 \Sigma_{k-1}\Sigma_{k+1}}{\big(\Sigma_k + (M-1)\Sigma_{k+1}\big)^4}\\
&\leq D_n \prod_{k=1}^n  \frac{M^4 \Sigma_k^2 \Sigma_{k-1}\Sigma_{k+1}}{\big(M \Sigma_k^{1/M} \Sigma_{k+1}^{(M-1)/M}\big)^4}\\
&= D_n \prod_{k=1}^n \Big(\frac{\Sigma_{k-1}}{\Sigma_k} \Big)  \Big(\frac{\Sigma_k}{\Sigma_{k+1}} \Big)^{\frac{3M-4}{M}} \\
&= D_n \frac{\Sigma_0 \Sigma_1^{(3M-4)/M}}{\Sigma_n \Sigma_{n+1}^{(3M-4)/M}} \\
&\leq D_n (\Sigma_0/\Sigma_{n+1})^{4(M-1)/M}. 
\end{split}
\end{align}
The second inequality in \eqref{D_ineq2}  is due to the AM-GM inequality which gives
$$
\Sigma_k +  \underbrace{\Sigma_{k+1} + ... + \Sigma_{k+1}}_{M-1 \text{ terms}} \ge M \Big( \Sigma_k \big(\Sigma_{k+1}\big)^{M-1}\Big)^\frac1M = M \ \Sigma_k^\frac1M \big(\Sigma_{k+1}\big)^{\frac{M-1}M}.
$$
The last inequality in \eqref{D_ineq2} comes from  $\Sigma_n$ being decreasing in $n$.

We split the argument into cases:

\noindent {\bf Case 1/4:} Suppose that $n^*=0$. Then,   \eqref{derivative_bound} holds by \eqref{D_n0}.

\noindent {\bf Case 2/4:} Suppose that $n_0=1$. Then, \eqref{difference_concavity} implies that $D_0<D_1<\cdots<D_N$. Therefore, $n^*=0$ and \eqref{derivative_bound} holds by Case 1.

\noindent {\bf Case 3/4:} Suppose that $n^*\geq 1$ and $2\leq n_0 \leq N$. Then, \eqref{n_def} gives $\Sigma_{n_0}<\frac{1}{2M^2}\leq \Sigma_{n_0 - 1}$. The concavity in  \eqref{difference_concavity} implies $\Sigma_{n_0 - 1} = \sum_{n=n_0 -1}^N D_n > (N-n_0 +2) D_{n_0 -1}$ and $n^*\leq n_0-1$. These observations produce
\begin{align}
\Sigma_{n^*+1}\geq \Sigma_{n_0} = \Sigma_{n_0 - 1} - D_{n_0 -1} > \frac{N-n_0+1}{N-n_0+2}\,\, \Sigma_{n_0 - 1} \geq \frac{1}{4M^2}. \label{Sigma_n0}
\end{align}
By combining \eqref{D_n0}, \eqref{D_ineq2}, and \eqref{Sigma_n0}, we obtain
\begin{align*}
N D_0 \leq N D_{n^*} \, (\Sigma_0/\Sigma_{n^*+1})^{4(M-1)/M} \leq  (4M^2)^{4(M-1)/M} \Sigma_0^{(5M-4)/M} .
\end{align*}

\noindent {\bf Case 4/4:} Suppose that $n^*\geq 1$ and $n_0=N+1$. Because $\Sigma_{N+1}=0$, $n_0=N+1$ gives $D_N=\Sigma_N\geq \frac{1}{2M^2}$ and the assumption $\frac{N+1}{2M^2} > \Sigma_0$ implies that  $n^* < N$. Indeed, if $n^*=N$, we would get the contradiction
$$
\Sigma_0 = \sum_{n=0}^N D_n \ge  \sum_{n=0}^N D_N \ge \frac{(N+1)}{2M^2}.
$$
Therefore, $1\leq n^* \leq N-1$. Because $\Sigma_n$ is decreasing, we have $\Sigma_{n^* +1} \geq \Sigma_N \geq \frac{1}{2M^2}$ and the inequalities in \eqref{D_n0} and \eqref{D_ineq2} produce
\begin{align*}
N   D_0 \leq N D_{n^*} \, (\Sigma_0/\Sigma_{n^*+1})^{4(M-1)/M} \leq  (2M^2)^{4(M-1)/M} \Sigma_0^{(5M-4)/M} .
\end{align*}

$\endproof$

\subsection{ODE estimates}

The next result collects properties of the ODE in \eqref{BCW1}, which we need for our convergence proof.
 To simplify the exposition, we normalize the initial value to $S(0)=1$. This assumption is equivalent to considering $\rho := \frac{(M-2) }{2 (M-1)}\sigma_{\ta}^2$ in what follows and is without any loss of generality.

\begin{lemma}\label{lemma:BCWplus} For $M\ge 2$ partially informed traders and $\rho := \frac{(M-2) }{2 (M-1)}\sigma_{\ta}^2$, we have
\begin{enumerate}

\item Uniqueness holds for \eqref{BCW1} and the solution satisfies $S'(1^-) =-\infty$.

\item For any constant $b <0$, the first-order ODE 
\begin{align}\label{BCW2}
\begin{cases}
S_b'(t) =  b S_b(t)^{4-\frac4M} e^{\frac{2}{MS_b(t)} - \frac2M},\quad t>0, \\
S_b(0) = 1,
\end{cases}
\end{align}
has a unique local solution. The maximal interval of existence is $[0,T_b)$ where
\begin{align}\label{TbandA}
T_b := -\frac{A}{b} e^{\frac2M},\quad A:= \int_0^1 u^{\frac4M -4}e^{-\frac2{Mu}} du \in (0,\infty).
\end{align}
Furthermore, $ \lim_{t\uparrow T_b} S_b(t)=0$.

\item For $b < S'(0)$, there exists $\delta \in (0,\frac1{4M^2})$ such that
\begin{align}\label{2conds_delta}
T_b < 1-2\delta,\quad S'_b (\tau(b,\delta)) <-1,
\end{align}
where, for $b <0$ and $\delta \in (0,1)$, the first passage time is
\begin{align}\label{tau_b_delta}
\tau(b,\delta) := \inf\{t>0 : S_b(t) \le \delta\} \in (0,T_b).
\end{align}

\end{enumerate}
\end{lemma}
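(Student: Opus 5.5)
The whole lemma rests on a first integral of \eqref{BCW1}. Let $S$ be a strictly decreasing solution in $\sC([0,1])\cap\sC^2([0,1))$; it satisfies $S>0$ on $[0,1)$.

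First I show $S'<0$ on $[0,1)$. On $\{S>0\}$ the right-hand side of \eqref{BCW1} is locally Lipschitz in $(S,S')$. If $S'(t_0)=0$ for some $t_0\in[0,1)$, then the constant function $S(t_0)$ solves the same initial value problem. Uniqueness would then make $S$ constant, which contradicts strict monotonicity.

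With $S'<0$, divide \eqref{BCW1} by $S'$:
\begin{align*}
\frac{d}{dt}\log(-S'(t)) = \frac{S''}{S'} = \Big(\frac{4(M-1)}{M}\frac{1}{S} - \frac{2}{M S^2}\Big)S' = \frac{d}{dt}\Big(\big(4-\tfrac4M\big)\log S + \frac{2}{MS}\Big).
\end{align*}
Integrate from $0$ to $t$, using the normalization $S(0)=1$. This normalization is exactly what the choice $\rho=\frac{(M-2)}{2(M-1)}\sigma_{\ta}^2$ gives: then $\V[\tv]=M\sigma_{\ta}^2+M(M-1)\rho=\frac{M^2\sigma_{\ta}^2}{2}=M(M-1)(\sigma_{\ta}^2-\rho)$. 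The result is that every such $S$ solves the first-order ODE \eqref{BCW2} with $b=S'(0)<0$.

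Part 2 comes next, since Part 1 reduces to it. For $S_b>0$ the right-hand side of \eqref{BCW2} is locally Lipschitz in $S_b$, so a unique local solution exists. The equation is separable:
\begin{align*}
\int_{S_b(t)}^1 u^{\frac4M-4}e^{-\frac{2}{Mu}}\,du = -b\,e^{-\frac2M}\,t .
\end{align*}
The integrand is continuous on $(0,1]$ and tends to $0$ as $u\downarrow0$, because the exponential beats the power. Hence $A\in(0,\infty)$. The solution decreases strictly and reaches $0$ exactly at $T_b=-\frac{A}{b}e^{\frac2M}$. It cannot be continued past $T_b$, since the equation is only posed for $S_b>0$ and $S_b\to0$ there. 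This gives \eqref{TbandA} and $\lim_{t\uparrow T_b}S_b(t)=0$.

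For Part 1, a solution of \eqref{BCW1} equals $S_b$ with $b=S'(0)$, and the boundary condition $S(1)=0$ forces $T_b=1$. So $b=-Ae^{2/M}$ is uniquely determined, and the solution is unique. Moreover, as $t\uparrow1$ we have $S'(t)=bS^{4-4/M}e^{2/(MS)-2/M}\to-\infty$, because $e^{2/(MS)}$ dominates the vanishing power $S^{4-4/M}$. This gives $S'(1^-)=-\infty$.

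For Part 3, write $b^*:=S'(0)=-Ae^{2/M}$, so that $T_{b^*}=1$. If $b<b^*<0$, then $|b|>|b^*|$ and hence $T_b=T_{b^*}\,b^*/b<1$. Any $\delta<\frac{1-T_b}{2}$ then yields the first condition in \eqref{2conds_delta}.

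For the second condition, $\delta<1=S_b(0)$ and continuity give $\tau(b,\delta)\in(0,T_b)$ with $S_b(\tau)=\delta$. Then
\begin{align*}
S_b'(\tau) = b\,\delta^{4-\frac4M}e^{\frac{2}{M\delta}-\frac2M} \to -\infty \quad \text{as } \delta\downarrow0 .
\end{align*}
So a sufficiently small $\delta$ makes $S'_b(\tau)<-1$. Shrinking $\delta$ further gives $\delta<\frac{1}{4M^2}$, and all three requirements hold simultaneously.

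The main obstacle is spotting the first integral. After that, the only delicate point is justifying $S'<0$ on $[0,1)$, which is needed to take the logarithm; the uniqueness argument above handles it.
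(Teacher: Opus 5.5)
Your argument is correct and uses the same first integral as the paper, which writes it as $(H\circ S)''=0$ for $H(s)=\int_s^1 u^{\frac4M-4}e^{-\frac{2}{Mu}}\,du$ and reads off $H(S(t))=At$ from both boundary values, instead of routing Part 1 through Part 2; Parts 2 and 3 then proceed as you do. One small economy: since $H''(s)=-H'(s)\,\frac{4(M-1)s-2}{Ms^2}$, the identity $(H\circ S)''=0$ holds without knowing $S'\neq 0$, so in the paper's formulation your (valid) Picard--Lindel\"of step establishing $S'<0$ is not needed.
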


\proof 

1. We define the function
$$
H(s) :=  \int_s^1 u^{\frac4M -4}e^{-\frac2{Mu}} du,\quad s\in[0,1].
$$
This function is positive and strictly decreasing. Because the ODE \eqref{BCW1} gives
$$
H\big(S(t)\big)'' = 0,
$$
the function $t\to H(S(t))$ is affine. Let $A$ be as in \eqref{TbandA}. The two properties
$$
H\big(S(0)\big) = H(1) =0,\quad H\big(S(1)\big) = H(0) = A
$$
give $H\big(S(t)\big) = At$. Therefore,
$$
A = H\big(S(t)\big)' = H'\big(S(t)\big)S'(t).
$$
This gives (below, we need the limit at $t=0$)
\begin{align*}
&\lim_{t\uparrow 1} S'(t) =\lim_{t\uparrow 1}  \frac{A}{H'\big(S(t)\big)} = -\infty,\\
&\lim_{t\downarrow 0} S'(t) =\lim_{t\downarrow 0}  \frac{A}{H'\big(S(t)\big)} =   \frac{A}{H'(1)} = - A e^{\frac2M} .
\end{align*}

To see uniqueness, we let $S$ and $\tilde S$ be two solutions of  \eqref{BCW1}. The transformed functions
\begin{align}\label{uniquenessY}
Y(t) := H\big(S(t)\big),\quad \tilde Y(t) := H\big(\tilde S(t)\big)
\end{align}
both equal $At$. Because $H$ is one-to-one, we can invert to see that uniqueness holds. \ \\

\noindent 2.  Because 
$$
H\big(S_b(t)\big)' = - b e^{-\frac2M}, 
$$
the function $t\to H\big(S_b(t)\big)$ is affine. The initial condition $S_b(0)=1$ gives
$$
H\big(S_b(0)\big) = H(1) =0
$$
and so $H\big(S_b(t)\big) = - b e^{-\frac2M} t$. Because $H$ is one-to-one, we see that $S_b(T_b) =0$ if and only if
$$
A= H(0) = H\big(S_b(T_b)\big) = - b e^{-\frac2M} T_b.
$$

Uniqueness is shown by replacing $S$ and $\tilde S$ in \eqref{uniquenessY} above with two solutions $S_b$ and $\tilde S_b$ and then use $H\big(S_b(t)\big) =H(\tilde S_b(t)) = - b e^{-\frac2M} t$ together with $H$ being one-to-one.

\noindent  3.  Because $b < S'(0) <0$ we have
$$
T_b =  -\frac{A}{b} e^{\frac2M} < -\frac{A}{S'(0)} e^{\frac2M} = -H'(1)  e^{\frac2M}=1.
$$ 
By definition of $\tau(b,\delta)$ in \eqref{tau_b_delta}, we have $S_b(\tau(b,\delta))=\delta$ and so
\begin{align*}
S_b'(\tau(b,\delta)) &=  b \delta^{4-\frac4M} e^{\frac{2}{M\delta} - \frac2M}\\
&<  S'(0) \delta^{4-\frac4M} e^{\frac{2}{M\delta} - \frac2M}\\
&= -A \delta^{4-\frac4M} e^{\frac{2}{M\delta}}.
\end{align*}
Therefore, 
$$
\lim_{\delta \downarrow 0}S_b'(\tau(b,\delta)) = -\infty,
$$
which allows us to find arbitrary small values of $\delta>0$ such that \eqref{2conds_delta} holds.
$\endproof$

\subsection{Convergence of recursion to ODE}
Unlike the proof of Lemma \ref{lem_difference}, we run the recursion from $n=0$ to $n=N$ throughout this section. 
For $0\le z\le y\le x$, the function $h = h(x,y,z)$ defined in \eqref{h_def} satisfies
\begin{align}\label{h_forward}
\begin{split}
h(x,y,0)&=y \, g(x,y)^2,  \\
h(x,y,y)&=-M^2 x (x-y) y^3,\\
h_3(x,y,0)&=-g(x,y)^2 - x(x-y)y^2,\\
h_{33}(x,y,z) &= -2(M-1) x(x-y) (2y + 3(M-1)z),
\end{split}
\end{align} 
where $h_3$ and $h_{33}$ denote the partial derivatives of $h$ with respect to its third argument.
The above expressions imply that for $x\geq y \geq 0$, there exists a unique $z\in [0,y]$ such that $h(x,y,z)=0$.
Therefore, for $b\in (-\infty,0)$ and $N\in\N$, we can uniquely determine $\{\Sigma_{N,b}(n)\}_{n=0}^{N+1}$ as the solution of the following difference equations:
\begin{align}\label{Sigma_forward}
\begin{split}
&\Sigma_{N,b}(0) = 1, \\  
&\Sigma_{N,b}(1) = 1+ b/N,\\
&h\big(\Sigma_{N,b}(n-1), \Sigma_{N,b}(n), \Sigma_{N,b}(n+1)\big)=0 \quad \textrm{for}\quad n=1,2,...,N.
\end{split}
\end{align}
If the solution $\{\Sigma_{N,b}(n)\}_{n=0}^{N+1}$ also satisfies 
\begin{align}\label{extra2conds}
\Sigma_{N,b}(N+1)=0,\quad 0<\Sigma_{N,b}(n)<\Sigma_{N,b}(n-1)\leq \Sigma_{N,b}(n)^+,\quad n=1,2,...,N,
\end{align} 
then $\left\{\Sigma_{N,b}(n)\right\}_{n=0}^N$ is the solution of the original backward recursion in Lemma~\ref{lem_difference}.
The connection between the forward recursion $\Sigma_{N,b_N}(n)$ from \eqref{Sigma_forward} and the backward recursion $\Sigma_n$ from  Lemma~\ref{lem_difference} can be seen by setting $b_N:= N(\Sigma_1-\Sigma_0) \in (-\infty,0)$, in which case $\Sigma_{N,b_N}(n)=\Sigma_n$.

The next lemma considers the forward recursion and neither of the two conditions in \eqref{extra2conds} are imposed. Even though the lemma's proof is long, our arguments are fairly standard for proving convergence of a discrete scheme to the solution of a second-order ODE. For $b<0$ and $\delta>0$, $T_b$, $\tau(b,\delta)$, and $S_b$ are from Lemma \ref{lemma:BCWplus}. For $b=0$ and $\delta >0$, we define $T_b:=\tau(b,\delta):=\infty$. Furthermore, using $b=0$ in \eqref{BCW2} gives $S_0(t) = 1$ for all $t\ge0$.


\begin{lemma}\label{difference_to_ODE}  For $M\ge 2$ partially informed traders and $\rho := \frac{(M-2) }{2 (M-1)}\sigma_{\ta}^2$, we assume $\lim_{N\to \infty} b_N = c\leq 0$. Then, for $\delta\in (0,1)$ and $t\in [0,\tau(c,\delta)\wedge 1]$, we have
\begin{align}
\lim_{N\to \infty}\Sigma_{N,b_N}(\floor{tN}) = S_c(t) \quad \textrm{and}\quad 
\lim_{N\to \infty}\tfrac{\Sigma_{N,b_N}(\floor{tN}+1)-\Sigma_{N,b_N}(\floor{tN})}{1/N} = S'_c(t). \label{difference_limits}
\end{align}
\end{lemma}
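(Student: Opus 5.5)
The plan is a standard consistency-plus-stability argument for a second-order difference scheme. I treat \eqref{Sigma_forward} as a one-step map on the pair $(\Sigma_{N,b_N}(n),D_n)$, where $D_n:=N\big(\Sigma_{N,b_N}(n+1)-\Sigma_{N,b_N}(n)\big)$. I then compare it with the grid values of $S_c$ and close with a discrete Gronwall estimate.

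\textbf{Step 0 (the target solves the second-order ODE).} Differentiating \eqref{BCW2} gives
$$S_c''=S_c'^2\Big(\frac{4-4/M}{S_c}-\frac{2}{MS_c^2}\Big)=\frac{4(M-1)S_c-2}{MS_c^2}\,S_c'^2.$$
So $S_c$ solves the ODE in \eqref{BCW1}, with $S_c(0)=1$ and $S_c'(0)=c$; for $c=0$ this is $S_0\equiv1$. Fix $\delta$ and set $T:=\tau(c,\delta)\wedge 1$. On $[0,T]$ we have $S_c\ge\delta$, and $S_c'$, $S_c''$, $S_c'''$ are bounded; this is clear from \eqref{BCW2}, since $S_c$ stays away from $0$.

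\textbf{Step 1 (local expansion of the recursion).} Put $x=y-a$, $z=y+e$, where $a=D_{n-1}/N$ and $e=D_n/N$ are both small. Using $g(y,y)=My^2$, expand $h$ from \eqref{h_def}. The leading terms $-eM^2y^4+aM^2y^4$ force $e=a+O(a^2)$. The next-order terms give
$$h(y-a,y,y+e)=0 \iff e=a+\frac{4(M-1)y-2}{My^2}\,a^2+O(a^3),$$
uniformly for $y$ in a compact subset of $(0,\infty)$.

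I will make this rigorous with the implicit function theorem. I write $z=\Phi(x,y)$ for the root near $y$ and note that $h_3(y,y,y)=-M^2y^4\neq0$. This root lies in $[0,y]$ whenever $x\ge y$, because $h(x,y,y)\le0\le h(x,y,0)$ by \eqref{h_forward}. By the uniqueness statement after \eqref{h_forward}, it therefore coincides with the root selected by \eqref{Sigma_forward}. This identification is what lets me use a smooth map $\Phi$. The case $x=y$, which matters when $c=0$, gives $z=y$. The outcome is the one-step form
$$\Sigma(n+1)=\Sigma(n)+\tfrac1N D_n,\qquad D_n=D_{n-1}+\tfrac1N F\big(\Sigma(n),D_{n-1}\big)+O(N^{-2}),\qquad F(y,d)=\tfrac{4(M-1)y-2}{My^2}d^2.$$
The $O(N^{-2})$ term is uniform on the compact set $K:=\{y\ge\delta/2,\ |d|\le R\}$, where $R$ exceeds $2\sup_{[0,T]}|S_c'|$.

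\textbf{Step 2 (consistency and stability).} Taylor's theorem shows that the grid data $s_n:=S_c(n/N)$ and $d_n:=N(s_{n+1}-s_n)$ satisfy the same one-step relations with an $O(N^{-2})$ defect. This uses $S_c''=F(S_c,S_c')$ together with the bounded third derivative. Define the errors $\epsilon_n:=|\Sigma(n)-s_n|+|D_n-d_n|$. Since $F$ is Lipschitz on $K$, we get $\epsilon_{n}\le(1+L/N)\epsilon_{n-1}+C/N^2$, and hence
$$\epsilon_n\le e^{L}\big(\epsilon_0+C/N\big)\quad\text{for } n\le TN.$$
Here $\epsilon_0=|b_N-N(S_c(1/N)-1)|\to|c-c|=0$. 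This estimate is conditional on the iterates staying in $K$.

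I close it by a bootstrap. Let $n_1$ be the first index at which $(\Sigma(n),D_{n-1})$ leaves $K$. Before $n_1$ the bound holds, so the errors are $o(1)$. Since $s_n\ge\delta$ and $|d_n|\le R/2$, the iterates remain well inside $K$ for all $n\le TN$ once $N$ is large. Hence $n_1>TN$. Finally, $\Sigma(\lfloor tN\rfloor)-S_c(t)\to0$ and $D_{\lfloor tN\rfloor}-S_c'(t)\to0$ follow from the error bound and the continuity of $S_c$ and $S_c'$, which gives \eqref{difference_limits}.

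\textbf{Main obstacle.} I expect the hardest part to be Step 1. It requires the careful second-order expansion of $h$ to confirm that the coefficient is exactly $\frac{4(M-1)y-2}{My^2}$, with remainder bounds that are uniform on $K$. It also requires verifying that the branch given by the implicit function theorem is the root in $[0,y]$ that the paper selects. I would first do the expansion symbolically, writing $h(y-a,y,y+e)$ as a polynomial in $(a,e)$ and solving perturbatively for $e$. The remaining steps are routine.
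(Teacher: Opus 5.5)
Your proposal is correct and follows essentially the same route as the paper: the paper also checks that $h(y+p,y,y-p+F(y)p^2)=O(p^3)$ (your Step 1 with $p=-a$), uses $h_3\le -\underline C$ near the diagonal to pin down the selected root and obtain the one-step relation for $v_n=N(u_{n+1}-u_n)$, and closes with a discrete Gronwall bound inside a bootstrap that keeps $u_n\ge m/2$ and $v_n\ge -C_v$. The differences are cosmetic. You use the implicit function theorem where the paper uses the intermediate and mean value theorems, and you compare with forward differences of $S_c$ rather than with $S_c'(t_n)$. Two small points need care. First, to identify your implicit-function root with the selected root in $[0,y]$, use $h_3<0$ on the neighborhood together with $h(x,y,y)\le 0$; the fact $h(x,y,y)\le 0\le h(x,y,0)$ alone only gives existence. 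Second, your bounds on $S_c$ must hold on a slightly larger interval than $[0,T]$, because $d_n$ looks one grid point ahead; this is why the paper works on $[0,\eta]$ with $\eta\in(\tau\wedge 1,T_c)$.
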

\proof
Throughout this proof, $\overline C$ and $ \underline C >0$ are generic constants that do not depend on $n$ and $N$ and may differ from line to line. We abbreviate by writing $\tau=\tau(c,\delta)$ and
\begin{align}
u_n:=\Sigma_{N,b_N}(n), \quad v_n:= N(u_{n+1}-u_n),\quad t_n:=\tfrac{n}{N}.
\end{align}
Then, $1=u_0>u_1>\dots$ is decreasing in $[0,1]$ and $v_n<0$ for all $n\in \{0,1,...,N\}$.

Let $\eta\in (\tau\wedge 1, T_c)$ be arbitrary for $T_c$ defined in \eqref{TbandA}. Lemma \ref{lemma:BCWplus}.2 gives 
\begin{align}
&m:=\min_{t\in [0,\eta]} S_c(t) = S_c(\eta)>0,\\
&S_c''(t)=F(S_c(t))\,S_c'(t)^{2}, \quad
   F(s):=\tfrac{4(M-1)s-2}{Ms^{2}}.
\end{align}
We define $C_v\in (0,\infty)$ by
\begin{align}
C_v:= \tfrac{m}{4}+ \max_{t\in [0,\eta]} | S_c'(t)| . \label{Cv_def}
\end{align}
We can choose large enough $\overline C>0$ such that
\begin{align}
\max_{t\in [0,\eta]} | S_c'(t)| \vee |S_c''(t)| \leq \overline C, \quad \max_{s\in [\frac{m}{2},1]} |F(s)| \vee |F'(s)| \leq \overline C. \label{bdd}
\end{align}
By Taylor's theorem and \eqref{bdd}, for $n\ge1$ with $t_n\leq \eta$, we have
\begin{align}
&\Big|S_c(t_n)- S_c(t_{n-1})- \tfrac{1}{N} S'_c(t_{n-1}) \Big| \leq \tfrac{\overline C}{N^2}, \label{ST1}\\
&\Big|S_c'(t_n)- S_c'(t_{n-1})- \tfrac{1}{N} F\big(S_c(t_n)\big) S'_c(t_{n-1})^2 \Big| \leq \tfrac{\overline C}{N^2}. \label{ST2}
\end{align}
\medskip
\noindent\textbf{Claim 1.} \emph{There exists $N_1\in \mathbb{N}$ with the following property: If $u_n \geq \frac{m}{2}$ and $v_{n-1}\geq -C_v$ for some $N\ge N_1$ and some $n \in \{1,...,N\}$, then }
\begin{align}
\left | v_n - v_{n-1} - \tfrac{1}{N}F(u_n) v_{n-1}^2  \right| \leq \tfrac{\overline C}{N^2}. \label{vn_claim}
\end{align}

\noindent\textit{Proof of Claim 1.}\quad We check by direct computations that $h(y+p,y,y-p+F(y)p^2)$ is a polynomial in $p$, divisible by $p^3$, with bounded coefficients for $y\in [\frac{m}{2},1]$. 
Therefore,
\begin{align}
\big| h(y+p,y,y-p+F(y)p^2) \big| \leq \overline C p^3 \quad \textrm{for} \quad p\in [0,1] \textrm{  and  } y\in [\tfrac{m}{2},1]. \label{h_approx_sol}
\end{align}
We check by direct computations that $h_3(y+p,y,y-q) + M^2 y^4$ is a polynomial in $p$ with no constant terms. Hence, there exist $\underline C>0$ and $p_*\in (0,1]$ such that
\begin{align}
\left.\begin{aligned}
&p-F(y)p^2\in [0,\tfrac{3}{2}p]\\
&  h_3(y+p,y,y-q) \leq - \underline C
\end{aligned} \,\,\, \right\} 
 \quad \textrm{for}\quad p\in [0,p_*], \,\, q\in [0,2p], \,\, y\in[\tfrac{m}{2},1]. \label{h_increase_q}
\end{align}
The first property in \eqref{h_increase_q} implies $p-F(y)p^2 < 2p$ so there exists $\theta \in (p-F(y)p^2,2p)$ such that
\begin{align*}
h(y+p,y,y-2p) -h(y+p,y,y-p+F(y)p^2) &=-h_3(y+p,y,y-\theta) p\big(1+pF(y)\big) \\
&\ge \underline C \,\,  p(1+pF(y)),
\end{align*}
where the inequality is from the second property in \eqref{h_increase_q}. Combining this with the bound in \eqref{h_approx_sol} gives 
\begin{align*}
h(y+p,y,y-2p) \ge \underline C p\big(1+pF(y)\big)-\overline C p^3.
\end{align*}
Therefore, by reducing $p_*$ if needed, we can assume $p_* \in (0,1]$ satisfies \eqref{h_increase_q} as well as
\begin{align}
h(y+p,y,y-2p)>0 \quad\textrm{for}\quad  p\in (0,p_*] \,\, \textrm{ and } \,\,y\in[\tfrac{m}{2},1]. \label{h(2p)>0}
\end{align}

By \eqref{h_increase_q}, the map $q\mapsto h(y+p,y,y-q)$ is strictly increasing on $[0,2p]$ with $h(y+p,y,y)=-M^2 (p+y)p y^3<0$. This observation and \eqref{h(2p)>0} imply that a unique root $Q(y,p)\in (0,2p)$ of the function $(0,2p)\ni q \to h(y+p,y,y-q)$ exists in the sense
\begin{align}
\left.\begin{aligned}
&h\big(y+p,y,y-Q(y,p)\big)=0\\
&Q(y,p)\in (0, 2p)
\end{aligned} \,\,\, \right\} 
\quad \textrm{for}\quad p\in (0,p_*] \,\, \textrm{ and } \,\,y\in[\tfrac{m}{2},1]. \label{Q_def}
\end{align}

To see Claim 1, we define $N_1:= \lceil \frac{C_v}{p_*}\rceil$.  Assume $u_n \geq \frac{m}{2}$ and $- C_v \leq v_{n-1}=N(u_n - u_{n-1})$ for some $N\ge N_1$ and $n \in \{1,...,N\}$. Since $1=u_0>...>u_N>0$, we have
\begin{align}
y:=u_n \in [\tfrac{m}{2}, 1] \quad \textrm{and} \quad p:= -\tfrac{v_{n-1}}{N}\in (0,p_*]. \label{uv_bound}
\end{align}
Since $0=h(u_{n-1},u_n,u_{n+1})=
h(u_n - \tfrac{v_{n-1}}{N}, u_n,  u_n + \tfrac{v_n}{N})$, \eqref{Q_def} and \eqref{uv_bound} imply
$$
q:= - \tfrac{v_n}{N}= Q(u_n,  - \tfrac{v_{n-1}}{N} ) \in (0, -\tfrac{2v_{n-1}}{N}).
$$  
If $q=p-F(y)p^2$,  there is nothing to prove. When $q\neq p-F(y)p^2$, we have 
\begin{align}
 \tfrac{\overline C}{N^3} &\geq \big| h(y+p,y,y-p+F(y)p^2) \big| \nonumber\\
 &= \big| h\big(y+p,y,y-p+F(y)p^2\big) - h(y+p,y,y-q) \big| \nonumber\\
&=\big|  h_3(y+p,y,y-\theta) \big| \big| q -p + F(y)p^2 \big| \nonumber \\
&\geq \underline C \,\, \big|  \tfrac{v_n}{N} - \tfrac{v_{n-1}}{N} - F(u_n) \tfrac{v_{n-1}^2}{N^2} \big|,  \label{mvt1}
\end{align}
 where the first inequality is from \eqref{h_approx_sol}, the first equality uses $h(y+p,y,y-q)=0$, the second equality is produced by the Mean-Value Theorem, which gives $\theta$ between $q$ and $p-F(y)p^2$, and  the last inequality is from \eqref{h_increase_q}. This finishes the proof of Claim 1. 
$\endproof$

\medskip

Next, we combine Claim~1 with \eqref{ST1} and \eqref{ST2} to control the error
\begin{align}
E_n:=|u_n-S_c(t_n)|+|v_n-S_c'(t_n)|.
\end{align}

\medskip
\noindent\textbf{Claim 2.} \emph{There exist $C_E>0$ and $N_2\ge N_1$ such that for all $N\geq N_2$ and all $n\in \{0,...,N\}$ with $t_n \leq \eta$, the following holds:}
\begin{align}
u_n\geq \tfrac m2,\quad v_n\geq -C_v,\quad E_n\leq C_E \big(|b_N-c|+\tfrac1N\big). \label{Gronwall_bd}
\end{align}
\noindent\textit{Proof of Claim 2.}

{\bf Step 1/2:}  Suppose that $N\geq N_1$, $u_n \in [\frac m2,1]$, and $v_{n-1}\in [-C_v,0)$. From \eqref{ST1}, we obtain
\begin{align}
|u_n-S_c(t_n)| &= \big| u_{n-1} - S_c(t_{n-1}) + \tfrac{1}{N}\big(v_{n-1} -S_c'(t_{n-1})\big) - \big( S_c(t_n) - S_c(t_{n-1}) - \tfrac{1}{N}S_c'(t_{n-1}) \big) \big| \nonumber \\
&\leq |u_{n-1}-S_c(t_{n-1})|+\tfrac1N|v_{n-1}-S_c'(t_{n-1})|+\tfrac{\overline C}{N^2}. \label{E_bdd1}
\end{align}
The bounds in \eqref{bdd}, $u_n\in [\tfrac{m}{2},1]$, and $ v_{n-1}\in [ -C_v,0)$ imply that 
\begin{align}
&\big| F(u_n)v_{n-1}^2-F(S_c(t_n))S_c'(t_{n-1})^2 \big| \nonumber\\
&=\Big| \Big(F(u_n)-F\big(S_c(t_n)\big)\Big)v_{n-1}^2+F\big(S_c(t_n)\big)\big(v_{n-1}^2-S_c'(t_{n-1})^2\big) \Big | \nonumber\\
&\leq \overline C  \big(  | u_n - S_c(t_n)| + |v_{n-1} - S_c'(t_{n-1})|\big), \label{mixed_bdd}
\end{align}
 where the inequality uses the relation $(a^2-b^2) = (a+b)(a-b)$ for $a,b\in\R$. We combine \eqref{ST2}, \eqref{vn_claim}, and \eqref{mixed_bdd} to see
\begin{align}
|v_n - S_c'(t_n)| &= \Big| \big( v_n - v_{n-1} - \tfrac{1}{N} F(u_n) v_{n-1}^2 \big)  - \Big( S_c'(t_n) - S_c'(t_{n-1}) - \tfrac{1}{N} F\big(S_c(t_n)\big) S_c'(t_{n-1})^2 \Big)  \nonumber\\
&\qquad + \big( v_{n-1} - S_c'(t_{n-1}) \big)+ \tfrac{1}{N}\Big( F(u_n)v_{n-1}^2 - F\big(S_c(t_{n})\big)S_c'(t_{n-1})^2 \Big) \Big|  \nonumber\\
&\leq \tfrac{\overline C}{N^2}+(1+ \tfrac{\overline C}{N}) \, |v_{n-1}- S_c'(t_{n-1})| +\tfrac{\overline C}{N} \, | u_n - S_c(t_n)| . \label{E_bdd2}
\end{align}
By increasing $\overline C$, \eqref{E_bdd1} and \eqref{E_bdd2} give the implication
\begin{align}
 (u_n,v_{n-1})\in [\tfrac{m}{2},1]\times [-C_v,0) \quad \textrm{and} \quad N\geq N_1 \quad \Longrightarrow\quad E_n \leq (1+\tfrac{\overline C}{N}) E_{n-1} +\tfrac{\overline C}{N^2} . \label{E_bdd3}
\end{align}

{\bf Step 2/2:}  Let $C_*$ be a fixed constant greater than all the $\overline C$ appearing in \eqref{E_bdd1}-\eqref{E_bdd3}. 
Since $b_N\to c$, we can choose $N_2\ge N_1$ such that
\begin{align}
\forall N\geq N_2: e^{\eta C_* }\big(|b_N-c|+\tfrac{1}{N}\big) <\tfrac{m}{4} \quad\text{and}\quad \tfrac{C_*}{N^2}<\tfrac{m}{4}. \label{E_bdd4}
\end{align}

We prove \eqref{Gronwall_bd} for $C_E:=e^{ \eta C_* } $ using induction in $n$. 
For $n=0$, we have $u_0=1\geq \frac{m}{2}$ and 
$E_0=|v_0 -c | =|b_N-c|<\tfrac{m}{4}$.
This inequality gives 
\begin{align*}
-C_v= -\tfrac{m}{4}- \max_{t\in [0,\eta]} | S_c'(t)|< -\tfrac{m}{4} - | S'_c(0) | = -\tfrac{m}{4} +c < v_0.
\end{align*}

For the induction step, we suppose that \eqref{Gronwall_bd} holds for $0,...,n-1$, where $n\in \{1,..., N\}$ and $t_n\leq \eta$. Part of the induction hypothesis gives $E_{n-1} \le C_E \big(|b_N-c|+\tfrac1N\big)$, which is bounded by $\frac{m}4$ by the first part of  \eqref{E_bdd4}. This together with \eqref{E_bdd1} and the second part of \eqref{E_bdd4} yields 
\begin{align}
|u_n-S_c(t_n)| \leq E_{n-1}+ \tfrac{C_*}{N^2}<\tfrac{m}{4}+\tfrac{m}{4}=\tfrac{m}{2}.
\end{align}
Therefore,
$$
u_n >S_c(t_n)- \tfrac{m}{2} \geq \tfrac{m}{2},
$$
which is the first property in  \eqref{Gronwall_bd}. 

Because $v_{n-1} \ge - C_v$ by the induction hypothesis, the recursive inequality in \eqref{E_bdd3} and $C_*\geq \overline C$ produce
\begin{align}\label{E_bdd5}
\begin{split}
E_n &\leq (1+\tfrac{C_*}{N}) E_{n-1} +\tfrac{C_*}{N^2} \\
&\leq (1+\tfrac{C_*}{N})^n | b_N-c |  + \tfrac{C_*}{N^2} \sum_{k=0}^{n-1} (1+\tfrac{C_*}{N})^k \\
&\leq e^{\eta C_* } \big( | b_N- c| +  \tfrac{1}{N}\big),
\end{split}
\end{align}
where the last inequality uses $1+x \leq  e^x$ and $t_n\leq \eta$. This is the last property in  \eqref{Gronwall_bd}. 

Finally, the first part of \eqref{E_bdd4} and \eqref{E_bdd5} imply
\begin{align*}
|v_n-S_c'(t_n)|\le E_n<\tfrac{m}{4}.
\end{align*}
The definition of $C_v$ gives
$$
 v_n> S_c'(t_n) - \tfrac{m}{4} \geq -C_v,
$$
which is the second property in  \eqref{Gronwall_bd}. This finishes the proof of Claim 2. 
$\endproof$
\medskip

To prove Lemma \ref{difference_to_ODE}, we fix $t\in[0,\tau\wedge 1]$ and set $n:=\floor{tN}$. Since $t\le 1$, we have $n\le N$, so $v_n:=N(u_{n+1}-u_n)$ is well defined, and $t_n=\tfrac nN\le t\le \tau \wedge 1 < \eta$. Hence, by Claim 2 and \eqref{bdd}, for $N\geq N_2$, 
\begin{align*}
| u_n - S_c(t) | + |v_n-S_c'(t)| &\leq | u_n - S_c(t_n) | + |v_n-S_c'(t_n)| + | S_c(t_n) - S_c(t) | + | S_c'(t_n) - S_c'(t) | \\
&\leq  \overline C \big(|b_N-c|+\tfrac1N\big) \xrightarrow[N\to\infty]{}0,
\end{align*}
which gives both limits in \eqref{difference_limits}. 

$\endproof$

In the next lemma, the constant $A$ is from \eqref{TbandA} and $\Sigma_0,...,\Sigma_N$ is from  Lemma~\ref{lem_difference}.

\begin{lemma}\label{bN_limit} In the setting of Theorem \ref{thm_Main}, let $b_N:= N(\Sigma_1-\Sigma_0)$. Then, we have $\lim_{N\to \infty} b_N = - A\, e^{\frac{2}{M}}$.
\end{lemma}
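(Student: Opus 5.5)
Throughout I work with the normalization $\Sigma_0=1$ (so $\rho=\frac{(M-2)}{2(M-1)}\sigma_{\ta}^2$). Then $S'(0)=-Ae^{\frac2M}$ by Lemma \ref{lemma:BCWplus}.1, so the claim is $b_N\to S'(0)$.

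The plan is a compactness-plus-contradiction argument. By Lemma \ref{bN_bound}.2, $b_N=-N(\Sigma_0-\Sigma_1)$ is bounded for all large $N$, uniformly in $N$, and $b_N<0$. It therefore suffices to show that every convergent subsequence $b_{N_k}\to c\in(-\infty,0]$ has $c=S'(0)$. Since $\Sigma_{N,b_N}(n)=\Sigma_n$ is the forward recursion \eqref{Sigma_forward} with this $b_N$, Lemma \ref{difference_to_ODE} (applied along the subsequence) gives $\Sigma_{\floor{tN}}\to S_c(t)$ and the discrete derivatives converge to $S_c'(t)$ for $t\in[0,\tau(c,\delta)\wedge1]$.

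\emph{Case $c<S'(0)$.} Here $S_c$ hits zero too early. By Lemma \ref{lemma:BCWplus}.3, choose $\delta\in(0,\frac1{4M^2})$ with $T_c<1-2\delta$ and $S_c'(\tau(c,\delta))<-1$. Set $\tau=\tau(c,\delta)<1$. Lemma \ref{difference_to_ODE} at $t=\tau$ gives, for large $N$ in the subsequence, $\Sigma_{\floor{\tau N}}\approx\delta<\frac1{2M^2}$ and $N(\Sigma_{\floor{\tau N}+1}-\Sigma_{\floor{\tau N}})<-1$. From index $\floor{\tau N}$ onward we have $\Sigma_n<\frac1{2M^2}$, so the concavity in Lemma \ref{bN_bound}.1 makes the decrements $D_n$ nondecreasing. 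Hence $D_n\ge 1/N$ for all $n\ge\floor{\tau N}$, and
$$\Sigma_{\floor{\tau N}}\ge \sum_{n\ge \floor{\tau N}} D_n \ge (N-\floor{\tau N})/N\approx 1-\tau>2\delta.$$
This contradicts $\Sigma_{\floor{\tau N}}\approx\delta$.

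\emph{Case $c>S'(0)$, including $c=0$.} Here $S_c$ stays positive on $[0,1]$ because $T_c>1$ (with $T_0=\infty$). Take $\delta<S_c(1)$, so that $\tau(c,\delta)>1$ and Lemma \ref{difference_to_ODE} applies at $t=1$. It gives $\Sigma_N\to S_c(1)>0$ and $N(\Sigma_{N+1}-\Sigma_N)=-N\Sigma_N\to S_c'(1)$, which is finite. Since $-N\Sigma_N\to-\infty$, this is a contradiction. (Recall $\Sigma_{N+1}:=0$; one should check the forward recursion is consistent with this convention at $n=N$, which holds because $\Sigma_{N-1}=\Sigma_N^+$ is equivalent to $h(\Sigma_{N-1},\Sigma_N,0)=0$ by \eqref{h_forward}. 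This consistency is needed for Lemma \ref{difference_to_ODE} to apply up to index $N+1$.)

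Both cases are ruled out, so $c=S'(0)=-Ae^{2/M}$ and the full sequence converges. The main obstacle is the first case: turning the local ODE information near $\tau$ into the global discrete contradiction. The steps requiring care are the index bookkeeping, and ensuring $\Sigma_n<\frac1{2M^2}$ holds for all $n\ge\floor{\tau N}$ so that concavity can be applied on the whole tail; the latter holds by monotonicity of $\Sigma_n$ once $\Sigma_{\floor{\tau N}}<\frac1{2M^2}$, which we arrange since $\delta<\frac1{4M^2}$.
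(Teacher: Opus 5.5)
Your proposal is correct and is essentially the paper's proof. The paper works with $\limsup$ and $\liminf$ of $b_N$ rather than arbitrary convergent subsequences, but it uses the same two arguments: a limit above $-Ae^{\frac2M}$ is ruled out at $t=1$, since $\Sigma_N\to S_c(1)>0$ would force $-N\Sigma_N$ to have a finite limit, and a limit below it is ruled out by the same concavity tail-sum contradiction with $\delta\in(0,\frac{1}{4M^2})$ from Lemma \ref{lemma:BCWplus}.3, exactly as you do.
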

\proof
Lemma~\ref{bN_bound}.2 gives 
$$
\bar c:= \limsup_{N\to \infty} b_N \in (-\infty,0], \quad \underline c:= \liminf_{N\to \infty} b_N\in (-\infty,0].
$$

{\bf Step 1/2:}  We set $b_\infty := - A\, e^{\frac{2}{M}}<0$ and suppose $\bar c > b_\infty $. 
Since $-\bar c < -b_\infty $, we have from \eqref{TbandA} that $T_{\bar c} = \frac{A}{-\bar c}e^{\frac{2}{M}} > 1$ for $\bar c <0$ and we use $\tau(0, \delta):=T_{0}:=\infty$ for $\bar c=0$. Therefore, $S_{\bar c}(1) > 0$ and so we can choose $\delta \in \big(0, S_{\bar c}(1)\big)$and $\tau(\bar c, \delta) \wedge 1 = 1$. 
Let $\{b_{N_k}\}_{k\in \N}$ be a subsequence such that $\lim_{k\to \infty} b_{N_k} =\bar c$.
By Lemma~\ref{difference_to_ODE}, 
\begin{align*}
\lim_{k\to \infty}\Sigma_{N_k,b_{N_k}}(N_k) = S_{\bar c}(1)>0.
\end{align*}
However, this limit and the second limit in Lemma~\ref{difference_to_ODE} yield a contradiction:
\begin{align*}
-\infty =\lim_{k\to \infty}N_k(-\Sigma_{N_k,b_{N_k}}(N_k))=\lim_{k\to \infty}\frac{\Sigma_{N_k,b_{N_k}}(N_k+1)-\Sigma_{N_k,b_{N_k}}(N_k)}{1/N_k} = S'_{\bar c}(1) >-\infty.
\end{align*}

All in all, $\bar c > b_\infty$ is impossible and we conclude that  $\bar c \le b_\infty$.\ \\

{\bf Step 2/2:} Suppose that $\underline c< b_\infty$. By Lemma~\ref{lemma:BCWplus}.3, there exists $\delta\in (0,\frac{1}{4M^2})$ such that 
\begin{align}
T_{\underline c}<1-2\delta \quad \textrm{and} \quad S_{\underline c}'(\tau(\underline c,\delta))< -1. \label{cannot_reach}
\end{align}
Let $\{b_{N_k}\}_{k\in \N}$ be a subsequence such that $\lim_{k\to \infty} b_{N_k} =\underline c$.
By Lemma~\ref{difference_to_ODE}, we have
\begin{align*}
&\lim_{k\to \infty}\Sigma_{N_k,b_{N_k}}(\floor{\tau(\underline c, \delta)N_k}) = S_{\underline c}\big(\tau(\underline c, \delta)\big)=\delta,\\
&\lim_{k\to \infty}\frac{\Sigma_{N_k,b_{N_k}}(\floor{\tau(\underline c, \delta)N_k}+1)-\Sigma_{N_k,b_{N_k}}(\floor{\tau(\underline c, \delta)N_k})}{1/N_k} = S'_{\underline c}\big(\tau(\underline c, \delta)\big)<-1.
\end{align*}
These limits imply that for large enough $k$,
\begin{align}
&\Sigma_{N_k,b_{N_k}}(n) <2\delta<\tfrac{1}{2M^2} \quad \textrm{for} \quad n=\floor{\tau(\underline c, \delta)N_k},..., N_k, \label{cannot_reach2}\\
&\Sigma_{N_k,b_{N_k}}(\floor{\tau(\underline c, \delta)N_k})- \Sigma_{N_k,b_{N_k}}(\floor{\tau(\underline c, \delta)N_k}+1) >\tfrac{1}{N_k}. \nonumber
\end{align}
We fix such $k$ satisfying the above inequalities.
Lemma~\ref{bN_bound}.1 and the above inequalities produce
\begin{align}
\Sigma_{N_k,b_{N_k}}(n) - \Sigma_{N_k,b_{N_k}}(n+1)>\tfrac{1}{N_k}  \quad \textrm{for} \quad n=\floor{\tau(\underline c, \delta)N_k},..., N_k. \label{cannot_reach3}
\end{align}
Then, \eqref{cannot_reach2} and \eqref{cannot_reach3} produce the contradiction:
\begin{align*}
2\delta &> \Sigma_{N_k,b_{N_k}}(\floor{\tau(\underline c, \delta)N_k}) \\
&= \sum_{n=\floor{\tau(\underline c, \delta)N_k}}^{N_k} \Big(\Sigma_{N_k,b_{N_k}}(n) - \Sigma_{N_k,b_{N_k}}(n+1)\Big)\\
& > \frac{ N_k - \floor{\tau(\underline c, \delta)N_k} + 1 }{N_k} \\
&> 2\delta,
\end{align*}
where the last inequality is from $\tau(\underline c, \delta)< T_{\underline c}<1-2\delta$ by \eqref{cannot_reach}.  All in all, $\underline c < b_\infty$ is impossible and we conclude that  $\underline c \ge b_\infty$.\ \\

\smallskip

Steps 1 and 2 imply that $\bar c \leq b_\infty \leq \underline c$, and we conclude that $\bar c = \underline c=b_\infty$.
$\endproof$

\bigskip

\proof[Proof of Theorem \ref{thm_converge}]
 Lemma~\ref{bN_limit} gives $\lim_{N\to\infty} b_N = b_\infty := - A\, e^{\frac{2}{M}}<0$. From Lemma \ref{lemma:BCWplus}, we see $S = S_{b_\infty}$ with 
$T_{b_\infty} = 1$. Fix $t \in [0,1)$. Since $\tau(b_\infty, \delta) \uparrow 1$ as 
$\delta \downarrow 0$, we may choose $\delta \in (0,1)$ with $t < \tau(b_\infty, \delta)$. 
Then, Lemma~\ref{difference_to_ODE} produces \eqref{convergence_S_Sprime}.
$\endproof$

\end{document}